\newtheorem{lemma}{Lemma}[section]
\newtheorem{theorem}[lemma]{Theorem}
\newtheorem{corollary}[lemma]{Corollary}
\newtheorem{proposition}[lemma]{Proposition}
\theoremstyle{definition}
\newtheorem{definition}{Definition}[section]
\newtheorem{example}{Example}[section]
\numberwithin{equation}{section}
\theoremstyle{remark}
\newtheorem{Rem}{Remark}[section]
\newcommand\R{\mathbb R}
\newcommand\C{\mathbb C}
\newcommand\Hl{\mathbb H}
\newcommand\hp{\Hl\text{P}}
\newcommand\cp{\C\text{P}}
\newcommand\rp{\R\text{P}}
\newcommand\delbar{\overline{\partial}}
\newcommand\So{\operatorname{SO}}
\newcommand\Su{\operatorname{SU}}
\newcommand\Sl{\operatorname{SL}_2(\C)}
\newcommand\OO{\operatorname{O}}
\newcommand\g{{\frak g}}
\newcommand\su{{\frak su}}
\newcommand\la{\lambda}
\newcommand\aaa{\alpha}
\renewcommand\b{\beta}
\newcommand\x{\bold x}
\newcommand\y{\bold y}
\newcommand\p{\bold p}
\newcommand\q{\bold q}
\newcommand\X{\bold X}
\title{Isomonodromic Deformations and $\Su_2-$Invariant Instantons on $S^4$}
\author{Richard Mu\~{n}iz Manasliski}
\address{{Centro de Matemática, Facultad de Ciencias},
Iguá 4225 esq. Mataojo C.P. 11400, Montevideo, Uruguay}
\email{rmuniz@cmat.edu.uy}
\begin{document}

\begin{abstract} Anti-self-dual (ASD)  solutions to the Yang-Mills equation (or
instantons) over an anti-self-dual four manifold, which are invariant
under an appropriate  action of a three
dimensional Lie group, give rise, via twistor construction, to isomonodromic 
deformations of
connections on $\cp^1$ having four simple singularities. As is well known
this kind of deformations is governed by the sixth Painlev\'e equation P{\sc
vi}$(\alpha,\beta,\gamma,\delta)$.
  We work out the particular case of the 
$\operatorname{SU}_2$-action on $S^4$, obtained from the irreducible 
representation
on $\R^5$. In particular, we express the parameters
$(\alpha,\beta,\gamma,\delta)$ in terms of the instanton number. The present 
paper contains the proof of the result anounced in ~\cite{Mu}. 
\end{abstract}

\maketitle

\section{Introduction}

 Instantons are particular solutions
 to the Yang-Mills equation, in dimension four, and if they
 admit a sufficiently
 large symmetry group (a three dimensional one acting with
 cohomogeneity one) it is possible to make  a
 one dimensional reduction of the self-duality equation.
We will focus on the fact that these instantons, for an appropriate
 action of a Lie group, are related to an ordinary differential
 equation on the complex domain, which is known as the sixth
 Painlev\'e equation. We denote it by P{\sc vi}$(\alpha,\beta,\gamma,\delta)$ 
and it is given by
\begin{align*} 
      \frac{d^2y}{dx^2}=
                               &\frac12\left(\frac1y+\frac1{y-1}+\frac1{y-x}\right)
                               \left(\frac{dy}{dx}\right)^2-
                               \left(\frac1x+\frac1{x-1}+\frac1{y-x}\right)
                               \left(\frac{dy}{dx}\right)+\\   
                         &\left(\alpha+\beta\frac{x}{y^2}+
                      \gamma\frac{x-1}{(y-1)^2}+\delta\frac{x(x-1)}{(y-x)^2}\right)\frac{y(y-1)(y-x)}{x^2(x-1)^2},   
\end{align*}
where $\aaa$, $\beta$, $\gamma$ and $\delta$ are complex parameters.
This equation was found by Richard Fuchs in 1907 and is very
important in the ODE's theory over the complex domain. Its general
solutions are transcendental functions and its key
property is that it does not have {\it movable critical points}, what is
usually called {\it Painlev\'e's property} ~\cite{Con} . The method used by Fuchs,
which is the way P{\sc vi} appears here, is that of {\it
isomonodromic deformations}.

Our research is inspired by the work of Nigel Hitchin
~\cite{Hit1}, where P{\sc vi} is related to some new examples of Einstein 
metrics. The basic idea is to use the symmetry 
group to produce a
connection on the twistor space with {\it logarithmic singularity}
along an anticanonical divisor. Since the twistor space is a fibre
bundle with fibre $\cp^1$ over the real 4-manifold, and the divisor
will intersect a fibre generically in four distinct points, the connection gives
rise to an isomonodromic deformation of connections on $\cp^1$ having
four simple poles. This kind of deformations is governed by the sixth
Painlev\'e equation, which was proved in ~\cite{JiMi}, and this gives us the
relation with invariant instantons mentioned before.

 We apply the construction to the example of 
$\Su_2$ acting on $S^4$ via the irreducible representation on
$\R^5$; this action has two exceptional orbits of dimension two and
all the others are three dimensional. The $\Su_2$-invariant instantons
are classified by the integer numbers of the form $n\equiv 1\text{mod }4$. 
For each $n$ the
Chern number of the instanton is $c_2=\frac{n^2-1}{8}$ (see ~\cite{Gil1}
or ~\cite{Gil2}). We will prove that the parameters of P{\sc vi} related
to these instantons are given by
\begin{align*}
\aaa^{\pm} &=\tfrac{1}{8} (n\pm 2)^2 &\beta &=-\tfrac{1}{8} n^2\\
\gamma &=\tfrac 18n^2        &\delta   &=-\tfrac18(n^2-4).
\end{align*}

It is known that for these parameters the equations are algebraically
related (~\cite{Oka}). These values for the parameters were conjectured
by Jan Segert in ~\cite{Seg}, but there has been no proof in the literature
until now. Our point of view here is quite different from that of
~\cite{Seg}, since we use a more geometric approach. 

 We also make some explicit calculations by means of which it is possible
 to express the solution to Painlev\'e's equation in terms of the
 corresponding instanton.

Finally, we would like to mention here that it is possible to look at
instantons which are singular along the exceptional orbits. For
instance instantons having a holonomic singularity as those considered
by Kronheimer and Mrowka in ~\cite{KronMro}.  It is not difficult to show
what the corresponding parameters would be for these instantons in
terms of the holonomy parameter. Unfortunatly we are yet unable to prove
 that such instantons exist with the property of being invariant
under the action, even in the case of $S^4$.

\section{Preliminaries}

In this section we remember the definition of the singular connection 
 on the Penrose-Ward
 transform of an anti-self-dual (ASD) vector bundle, induced by the
 action of a three
 dimensional Lie group $G$. We review some of its properties, in
 particular that the singularity is logarithmic along an anticanonical
 divisor of the twistor space. This connection gives rise to an
 isomonodromic deformation of connections on $\cp^1$, and this leads us
 to the sixth Painlev\'e equation. 
 
\subsection{Cohomogeneity one ASD manifolds and their
 associated connection}

In what follows $M$ will denote an ASD (in
particular four dimensional) manifold, so we have its
twistor space $Z$, and $G$ will be a compact three dimensional Lie
group (we can think for example in $\Su_2$). We suppose that $G$ acts
on $M$ having three dimensional principal orbit and preserving the
conformal structure.  In particular, the principal stabilizer
$\Gamma\subset G$ is discrete and $M/G$ is diffeomorphic to $\R$,
$S^1$, $[0,1)$ or $[0,1]$. We can always take a section $M/G\rightarrow
M$ which intersects all the orbits perpendicularly. Such an action
induces an action  on the twistor
space by biholomorphisms, that preserves the real structure and the
twistor lines. At the infinitesimal level we can write the action as a map
 $\alpha\colon Z\times{\frak g}\rightarrow TZ$, which when complexified is
 a homomorphism
\begin{equation}\alpha\colon Z\times{\frak g}_{\C}\rightarrow TZ.
\end{equation}
(We denote by  $TZ$ the holomorphic tangent bundle of $Z$.)
 For the following definitions to make sense we will assume that $\aaa$
 has rank three.

Set $Y=\{z\in Z: rk(\aaa_z)\leq 2\}\subset Z$. By looking at $\aaa$ as a
section of $TZ\otimes{\frak g}_{\C}$ we have that
 $s:=\Lambda^3\aaa\in H^0(Z,\Lambda^3TZ\otimes\Lambda^3{\frak
 g}_{\C})\cong H^0(Z,K^{-1}_Z)$, is a section of the anticanonical
 bundle of $Z$. The set $Y$ is then exactly the zero locus of $s$, and
 it is nonempty because the anticanonical divisor has degree four
 restricted to each real twistor line. Then $Y$ is a codimension
 one analytic subvariety of $Z$; we denote by $Y^{\circ}$ the smooth
 part of $Y$.

 From the fact that the action preserves the real structure we
 conclude that $\aaa$ is compatible with this structure, in the sense
 that the following identity is satisfied 
$$d\tau\circ\aaa_z=\aaa_{\tau(z)}.$$
This implies that $s$ restricted to
a line
vanishes identically, has two double zeroes in two antipodal points, or
vanishes non degenerately in four points forming antipodal pairs. For a more 
detailed explanation see ~\cite{Hit1}.

Let us now consider a Hermitian vector bundle $E$ over $M$ endowed with
an ASD connection $\mathcal D$, and structure group $S=\Su_2$, together 
with a lift
of the $G$-action by homomorphisms leaving the connection invariant. Denote 
by $\tilde{E}$ the holomorphic vector bundle over $Z$ induced by the pair 
$(E, \mathcal D)$; usually known as Penrose-Ward transform ~\cite{AHS}.
It is possible to define a connection $\nabla$ in
 $\tilde{E}\big|_{Z\setminus Y}$, in the following way (as is done 
in ~\cite{MaWoo})
:
$$
\nabla_{\aaa(\mathbf X)}:={\mathcal L}_{\mathbf X}\qquad\forall\,\mathbf X\in\frak{g}_{\C}.
$$
Where ${\mathcal L}_{\mathbf X}$ is Lie derivative of sections.
It is not difficult to verify the integrability of this conection.

\begin{proposition} The connection $\nabla$ defined above in
$\tilde{E}\big|_{Z\setminus Y}$ is flat.
\end{proposition}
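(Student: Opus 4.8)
The plan is to show that the curvature of $\nabla$ vanishes, by evaluating it on vector fields of the form $\aaa(\X)$, which span $TZ$ at every point of $Z\setminus Y$ (since $\aaa$ has rank three there). For $\X,\Y\in\g_\C$, using the definition $\nabla_{\aaa(\X)}={\mathcal L}_\X$ one computes
\[
F_\nabla(\aaa(\X),\aaa(\Y))
 = {\mathcal L}_\X{\mathcal L}_\Y-{\mathcal L}_\Y{\mathcal L}_\X-\nabla_{[\aaa(\X),\aaa(\Y)]}.
\]
The first two terms combine to ${\mathcal L}_{[\X,\Y]}$ because the map $\X\mapsto{\mathcal L}_\X$ is a Lie algebra homomorphism: this is exactly the statement that the infinitesimal $G$-action lifts to an action on $\tilde E$ (the lift is by homomorphisms, so Lie derivatives of sections compose correctly and reproduce the bracket of $\g_\C$).

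The remaining point is therefore to identify $[\aaa(\X),\aaa(\Y)]$, the Lie bracket of the fundamental vector fields on $Z$, with $\aaa([\X,\Y])$. Since $\aaa\colon Z\times\g_\C\to TZ$ is the complexification of the differential of the $G$-action, it is an anti-homomorphism (or homomorphism, depending on the sign convention for fundamental vector fields) of Lie algebras: $[\aaa(\X),\aaa(\Y)]=\pm\aaa([\X,\Y])$. Plugging this in gives $\nabla_{[\aaa(\X),\aaa(\Y)]}=\pm{\mathcal L}_{[\X,\Y]}$, and comparing with the $\pm{\mathcal L}_{[\X,\Y]}$ coming from the first two terms yields $F_\nabla(\aaa(\X),\aaa(\Y))=0$. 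One must keep the sign conventions consistent throughout — this is the one place where a careless choice would produce a spurious non-vanishing term — but with the conventions fixed the two contributions cancel identically.

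Finally, I would remark that $\nabla$ is well defined and the computation above is legitimate on all of $Z\setminus Y$: there $\aaa_z$ is surjective onto $T_zZ$, so every tangent vector is of the form $\aaa(\X)$ (non-uniquely, but the definition $\nabla_{\aaa(\X)}={\mathcal L}_\X$ is consistent because if $\aaa(\X)=\aaa(\X')$ at a point then $\X-\X'$ lies in the isotropy, which acts trivially on the fibre of $\tilde E$ by the invariance of $\mathcal D$, so ${\mathcal L}_\X={\mathcal L}_{\X'}$ on sections near that point — this is the integrability already noted in the text). Since the curvature, a tensorial $2$-form, vanishes on a spanning set of vector fields at each point, it vanishes identically, and $\nabla$ is flat. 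The main obstacle is purely bookkeeping: getting the Lie-algebra homomorphism signs for the fundamental vector fields and for the lift to $\tilde E$ to match so the cancellation is exact.
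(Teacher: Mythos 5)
Your proof is correct and is essentially the paper's argument: both reduce flatness to the fact that the infinitesimal $\g_{\C}$-action lifts to $\tilde E$ as a Lie algebra homomorphism, the paper phrasing this as Frobenius integrability of the horizontal distribution (the image of $\tilde{\aaa}\colon\g_{\C}\rightarrow\mathfrak X(\tilde E)$) while you unwind the same fact into the explicit curvature computation $F_\nabla(\aaa(\X),\aaa(\Y))={\mathcal L}_{[\X,\Y]}-\nabla_{\aaa([\X,\Y])}=0$. One small simplification: on $Z\setminus Y$ the map $\aaa_z\colon\g_{\C}\rightarrow T_zZ$ is a bijection (both spaces have complex dimension three and the rank is three), so the representation $T=\aaa(\X)$ is unique and the isotropy discussion in your last paragraph is unnecessary.
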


\begin{proof}  The statement has a simple geometric interpretation. The Lie
      algebra $\mathfrak g_{\C}$ acts on $\tilde E$, which means that we
           have a Lie algebra homomorphism $\tilde{\aaa}\colon \mathfrak
           g_{\C}\rightarrow\mathfrak X(\tilde E)$. The horizontal
           distribution of the connection $\nabla$ is simply the image
           of $\tilde{\aaa}$, from which we conclude its integrability
           since the Frobenius condition is necessarily satisfied. When the
           $G$ action can be extended to a $G^{\C}$ action the
           horizontal foliation is given by its orbits and the
           connection is $G^{\C}$-invariant.\end{proof}

By its definition $\nabla$ satisfies a compatibility condition with
the real structure. These means that if
$\sigma\colon\tau^{\ast}\bar{\tilde E}\rightarrow\tilde{E}^{\ast}$, is the 
isomorphism 
given by ~\cite{AHS} theorem ?, which can be expessed as
$\sigma=h\circ\hat\tau$, where $\hat\tau$ is the obvious lifting of
$\tau$ and $h\colon\bar{\tilde E}\rightarrow \tilde E^{\ast}$ is the
isomorphism induced by the Hermitian structure, then  
\begin{equation}\sigma(\tau^{\ast}\overline{\nabla})=\nabla^{\ast}.
\end{equation}
Or in coordinates,
\begin{equation}\tau^{\ast}A=-A^{\ast};\end{equation} 
where $A$ is the connection 1-form.

\begin{example} Let $M$ be a $G$-invariant conformal ASD manifold, such that
      $\aaa$ has rank three. We can think about the examples 2.1 and 2.2,
      or in ~\cite{Hit1}  more examples for $G=\Su_2$ can be found.

   Let  $(V,\rho)$ be a two dimensional representation of $G$, and let us
   take the trivial vector bundle over $M$, $E=M\times V$,
   with the trivial flat connection. The $G$ action on $E$ is simply
   the diagonal one
$$ g\cdot(x,v)=(gx,\rho(g)v).
$$   
Then $\tilde{E}=Z\times V$, and the connection 1-form of $\nabla$
is given by 
$$-\dot\rho\circ\aaa^{-1}\colon T(Z\setminus Y)\rightarrow\text{End}(V).$$ 
This can be seen by using the definition; more precisely, if $g(t)$ is
a curve in $G$ such that  $g(0)=I$, $\dot g(0)=\X$, then it follows that 
$$
   \nabla_\X s(z)={\mathcal L}_{\aaa^{-1}(\X)}s(z)=\frac{d}{dt}\Big|_{t=0}\rho(g(t)^{-1})s(g(t)z)=(\X-\dot\rho(\aaa^{-1}(\X)))s(z).
$$
The above is the connection considered by Hitchin in ~\cite{Hit1}, for
$G=\Su_2$ and $(V,\rho)$ the canonical representation.
\end{example}

\subsection{Logarithmic singularities}

What is important about the connection  $\nabla$ is its behaviour near the
singularity. It has what is called logarithmic singularity along
$Y^{\circ}$, which means that $\nabla$ has a simple pole along
$Y^{\circ}$ and that it induces a flat connection on
$\tilde{E}\big|_{Y^{\circ}}$ without singularities. We essentially
follow Malgrange's point of view in ~\cite{Mal} about this subject.

\begin{definition}
(1) A differential form $A$ in a complex manifold $X$ is
                 said to have a {\it simple pole\/} along a smooth
                 hypersurface $Y$, if it can be extended to a holomorphic
      section of $\Lambda^kT^{\ast}X\otimes[Y]$; here $[Y]$ is the line
      bundle asociated to $Y$.

(2) Let $i\colon Y\hookrightarrow X$ be the inclusion
         map. We say that a 1-form $A$ has {\it logarithmic
         singularity\/} along $Y$ if it has a simple pole
         there and $i^{\ast}A=0$. We are viewing $A$ as an
         element in $H^0(X,T^{\ast}X\otimes[Y])$ with the
         obvious extension of the map $i^{\ast}$.
\end{definition}

The above definition can be expressed locally as follows. Let $A$ be a
1-form with a simple pole along $Y$, and let us take a coordinate system
$(z_1,\ldots, z_n)$ in such a way that $Y$ is determined by
$\{z_1=0\}$. In these coordinates
$A=\sum_ja_jdz_j$
, and as it has simple pole on $Y$  $a_j=\frac{b_j}{z_1}$, with
$b_j$ holomorphic for all $j$. Therefore
$i^{\ast}A=\sum_jb_j|_Yi^{\ast}dz_j=\sum_{j\geq
2}b_j|_Yd(z_j|_Y)$, from which $a_j$ is holomorphic for $j\geq 2$. 
Namely,
$$
A=\frac{b_1}{z_1}dz_1+\sum_{j\geq 2}a_jdz_j,$$
with $\sum_{j\geq 2}a_jdz_j$ holomorphic. This is equivalent to
saying that $A=\frac{B}{z_1}$,  
$B$ being holomorphic and $B|_{TY}=0$. It can also be easily seen that $A$
has logarithmic singularity along $Y$ if and only if $A$ and $dA$ both
have simple poles there.

Let us supose now that $E$ is a vector bundle over $X$ endowed with an
integrable connection defined on $X\setminus Y$ (with $X$ and $Y$ as
in definition 3.1). Let $A$ be the
connection 1-form associated to a local frame of $E$.
\begin{definition} We say that $\nabla$ has a {\it simple pole} or a {\it logarithmic
 pole} along  $Y$ if $A$ has a simple pole or a logarithmic pole along $Y$
 respectively
\end{definition}

\begin{Rem} If $\nabla$ is a connection in $E\rightarrow X$ with
           a logarithmic singularity along $Y$, and $f\colon
           Z\rightarrow X$ is a holomorphic function which is
           transversal to $Y$, then $f^{\ast}\nabla$ is a connection
           with logarithmic singularity along $f^{-1}(Y)$.
\end{Rem}

\begin{theorem}[Hitchin] If $s$ vanishes non degenerately, then the
          1-form $\aaa^{-1}\colon TZ\rightarrow\mathfrak g_{\C}$ has
	  logarithmic singularity along the smooth part of $Y$.
\end{theorem}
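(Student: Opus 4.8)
The plan is to work in local holomorphic frames near a point of $Y^{\circ}$ and reduce everything to the classical adjugate formula for the inverse of a $3\times 3$ matrix; the only genuinely geometric input needed is the identification of the tangent space of $Y^{\circ}$ with the image of $\alpha$.

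First I would fix $z_0\in Y^{\circ}$, a holomorphic coframe of $Z$ near $z_0$ and a basis $\xi_1,\xi_2,\xi_3$ of $\mathfrak g_{\C}$, so that $\alpha$ is represented by a $3\times 3$ matrix $M$ of holomorphic functions. Then $s=\Lambda^3\alpha$ corresponds to $\det M$ (up to the nonvanishing factor coming from the frame identification $\Lambda^3TZ\cong K_Z^{-1}$), $Y=\{\det M=0\}$, and on $Z\setminus Y$ the $\mathfrak g_{\C}$-valued $1$-form $\alpha^{-1}$ is represented by $M^{-1}=(\det M)^{-1}\operatorname{adj}(M)$, where $\operatorname{adj}(M)$ is the classical adjoint. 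Since the entries of $\operatorname{adj}(M)$ are signed $2\times 2$ minors of $M$, hence holomorphic, and since non-degenerate vanishing of $s$ means that $\det M$ is a reduced local equation for $Y$ along $Y^{\circ}$, the form $\alpha^{-1}$ has a simple pole along $Y^{\circ}$; invariantly, $\operatorname{adj}(\alpha)$ is a global holomorphic section of $T^{\ast}Z\otimes\mathfrak g_{\C}\otimes K_Z^{-1}$ with $\alpha\circ\operatorname{adj}(\alpha)=s\cdot\operatorname{id}_{TZ}$ and $\operatorname{adj}(\alpha)\circ\alpha=s\cdot\operatorname{id}_{\mathfrak g_{\C}}$, and $\alpha^{-1}=s^{-1}\operatorname{adj}(\alpha)$. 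By the local characterization of logarithmic singularities given just after Definition 3.1, it now suffices to prove that $\operatorname{adj}(\alpha)$ restricted to $TY^{\circ}$ vanishes, i.e. that $\operatorname{adj}(\alpha)_z$ kills $T_zY$ for every $z\in Y^{\circ}$.

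I would establish this via two facts. \emph{(i) On $Y^{\circ}$ the rank of $\alpha$ is exactly two.} If $rk(\alpha_z)\le 1$ then every $2\times 2$ minor of $M(z)$ vanishes, so $\operatorname{adj}(M(z))=0$, and Jacobi's formula $d(\det M)_z(v)=\operatorname{tr}\!\big(\operatorname{adj}(M(z))\,D_vM\big)$ forces $ds_z=0$, contradicting non-degenerate vanishing. \emph{(ii) $\operatorname{im}(\alpha_z)\subseteq T_zY$ for $z\in Y^{\circ}$.} Since $\alpha$ is a Lie algebra homomorphism $\mathfrak g_{\C}\to\mathfrak X(Z)$ and $\mathfrak g$ is unimodular, a short Leibniz computation gives $\mathcal L_{\alpha(\mathbf X)}s=\operatorname{tr}(\operatorname{ad}\mathbf X)\,s=0$ for all $\mathbf X\in\mathfrak g_{\C}$, so the fundamental vector fields are tangent to the zero set of $s$; evaluating at $z\in Y$, where $s(z)=0$ so that $(\mathcal L_{\alpha(\mathbf X)}s)(z)=ds_z\big(\alpha(\mathbf X)_z\big)$, this reads $\alpha(\mathbf X)_z\in\ker ds_z=T_zY$. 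Since $\dim\operatorname{im}(\alpha_z)=2=\dim T_zY$ at a point of $Y^{\circ}$, (i) and (ii) give $\operatorname{im}(\alpha_z)=T_zY$. Finally, on $Y$ we have $\operatorname{adj}(\alpha)_z\circ\alpha_z=s(z)\operatorname{id}_{\mathfrak g_{\C}}=0$, hence $T_zY=\operatorname{im}(\alpha_z)\subseteq\ker\operatorname{adj}(\alpha)_z$, which is precisely the logarithmic condition.

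The main obstacle is step (ii) and its combination with (i): the formal identities satisfied by the adjugate only yield $\operatorname{im}(\alpha_z)\subseteq\ker\operatorname{adj}(\alpha)_z$, and to upgrade this to the vanishing of $\operatorname{adj}(\alpha)$ on all of $TY^{\circ}$ one genuinely needs that $\operatorname{im}(\alpha_z)$ already exhausts $T_zY$. This is where both hypotheses enter essentially — non-degeneracy of $s$ to prevent the rank of $\alpha$ from dropping to $1$, and the origin of $\alpha$ as the infinitesimal action of a unimodular group, which forces $Y$ to be invariant and hence the fundamental vector fields tangent to it. Everything else is the routine linear algebra of $3\times 3$ adjugates.
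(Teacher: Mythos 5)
Your proof is correct and complete: the reduction to showing that $\operatorname{adj}(\alpha)$ annihilates $TY^{\circ}$, the rank count on $Y^{\circ}$ via Jacobi's formula, and the tangency of the fundamental vector fields to $Y$ via $\mathcal L_{\alpha(\mathbf X)}s=\operatorname{tr}(\operatorname{ad}\mathbf X)\,s=0$ together close the argument with no gaps. The paper itself states this theorem without proof, attributing it to Hitchin; your argument is essentially the standard one from that source, so there is nothing to compare beyond noting that it is the right proof.
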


From the above  proposition together with proposition 2.2 one can
easily deduce the following result:

\begin{corollary} If $s$ vanishes non degenerately, the connection associated
to the action of $G$ has a logarithmic singularity along the smooth
part of $Y$.
\end{corollary}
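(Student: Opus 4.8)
The plan is to reduce the statement to the theorem of Hitchin above by writing, in a suitable local frame, the connection $1$-form of $\nabla$ explicitly as $\aaa^{-1}$ composed with a \emph{holomorphic} bundle map. Fix a point of $Y^{\circ}$ and a local holomorphic frame $(e_1,\dots,e_r)$ of $\tilde E$ on a neighbourhood $U\subset Z$, so that $\nabla=d+A$ on $U\setminus Y$ with $A$ an $\operatorname{End}(\tilde E)$-valued holomorphic $1$-form. The lift of the $G$-action to $\tilde E$ is part of the given data, and since $\tilde E\to Z$ is holomorphic with $G$ acting by holomorphic bundle maps, the infinitesimal action complexifies to a Lie algebra homomorphism $\tilde\aaa\colon\mathfrak{g}_{\C}\to\mathfrak X(\tilde E)$ into holomorphic vector fields, defined over \emph{all} of $Z$ — this is the map $\tilde\aaa$ from the proof of the flatness proposition (Proposition 2.2). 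For $\mathbf X\in\mathfrak{g}_{\C}$ the field $\tilde\aaa(\mathbf X)$ covers $\aaa(\mathbf X)$ and is linear on the fibres, so in the frame $(e_k)$ its vertical part is an endomorphism $\psi_z(\mathbf X)\in\operatorname{End}(\tilde E_z)$, holomorphic in $z\in U$ and linear in $\mathbf X$.

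I would then unwind the defining relation $\nabla_{\aaa(\mathbf X)}={\mathcal L}_{\mathbf X}$ exactly as in Example 2.3, where the same computation was carried out with $\tilde E$ trivial and $\psi$ the constant $\dot\rho$. A short computation of the Lie derivative of a section in the frame $(e_k)$ gives $({\mathcal L}_{\mathbf X}s)^{k}=\aaa(\mathbf X)s^{k}-\psi(\mathbf X)^{k}_{l}s^{l}$, while $(\nabla_{\aaa(\mathbf X)}s)^{k}=\aaa(\mathbf X)s^{k}+A(\aaa(\mathbf X))^{k}_{l}s^{l}$; comparing and using that $\aaa_z$ is an isomorphism for $z\notin Y$ gives
$$A=-\psi\circ\aaa^{-1}\colon T(U\setminus Y)\longrightarrow\operatorname{End}(\tilde E),$$
with $\psi$ holomorphic on all of $U$, in particular with no pole along $Y$. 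Conceptually this is the local form of the geometric picture used in Proposition 2.2: the horizontal distribution of $\nabla$ is the image of $\tilde\aaa$, so the only possible singularity of $A$ comes from inverting $\aaa$, never from $\psi$.

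The conclusion is then immediate. By Hitchin's theorem $\aaa^{-1}$ has a logarithmic singularity along $Y^{\circ}$, so in coordinates $(z_1,\dots,z_n)$ with $Y=\{z_1=0\}$ one has $\aaa^{-1}=B/z_1$ with $B$ holomorphic and $B|_{TY}=0$ (the local characterisation recalled after Definition 3.1). Hence $A=-(\psi\circ B)/z_1$ with $\psi\circ B$ holomorphic and $(\psi\circ B)|_{TY}=\psi|_{Y^{\circ}}\circ(B|_{TY})=0$, so $A$ has a simple pole along $Y^{\circ}$ with $i^{\ast}A=0$, i.e.\ a logarithmic singularity; by Definition 3.2 the same holds for $\nabla$. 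The only point that deserves genuine care — everything else is bookkeeping — is the holomorphic extension of $\psi$ across $Y$; this rests on the fact that the lift of the $G$-action, hence the complexified infinitesimal action and its isotropy data, is defined on all of $\tilde E$, even over the locus where the $G^{\C}$-orbits degenerate.
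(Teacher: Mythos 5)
Your proof is correct and follows exactly the route the paper intends: the paper deduces the corollary by combining Hitchin's theorem on $\aaa^{-1}$ with the observation (made explicit in its Example for the trivial bundle, where $A=-\dot\rho\circ\aaa^{-1}$) that the connection $1$-form is a holomorphic endomorphism-valued map composed with $\aaa^{-1}$. Your local-frame computation producing $A=-\psi\circ\aaa^{-1}$ with $\psi$ holomorphic across $Y$ is precisely the detail the paper leaves to the reader, and your verification of the logarithmic condition via $B|_{TY}=0$ is the right one.
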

 
 From now on we will consider only actions such that $s$ vanishes non
 degenerately, since we are interested in connections having
 a logarithmic singularity.

In a generic twistor line $P$ the connection 1-form can be written as 
\begin{equation}
A_P(z)dz=\left(\frac{A_1(P)}{z-z_1}+\frac{A_2(P)}{z-z_2}+\frac{A_3(P)}{z-z_3}+
\frac{A_4(P)}{z-z_4}\right)dz,\end{equation}
where the $A_i(P)\text{'s}\in\g_{\C}$ satisfy $\sum_i
A_i=0$, and $(z_1,z_2)$ $(z_3,z_4)$ are antipodal pairs. 
(We are implicitly using an identification of $P$ with $\cp^1$.)
In view of the compatibility with the real structure, the following 
relations between the residues hold
\begin{equation} A_2=-A^{\ast}_1,\qquad A_4=-A_3^{\ast}.\end{equation}

\subsection{Isomonodromic deformations and the
sixth Painlev\'e  equation}

By a monodromy representation on a space $B$ we mean a representation of
the fundamental group of $B$, with values in some group $\mathcal G$,
which is compatible with change of base point isomorphisms. An
isomonodromic deformation is just a family of monodromy
representations which is constant modulo conjugation. More precisely,
by an isomonodromic deformation on $F$ we mean a fiber bundle with
fiber $F$, base space $X$ and total space $B$, together with a
monodromy representation $(\rho_x,{\mathcal G})$ for each $x\in X$ such that
the translation homomorphisms ~\cite{Stee} are given by conjugation on
$\mathcal G$. 

The family of connections $\nabla\big|_P$ parametrized by twistor lines 
defines an isomonodromic deformation (see ~\cite{Woo}). As is shown in 
~\cite{Hit1} 
the four points of intersection of a generic line with $Y$ have nonconstant 
cross ratio (as we change the line). Then, we can assume the intersection 
points to be $\{0, 1, x, \infty\}$ being $x$ the cross ratio, after an 
appropiate M\"oebius transformation. Doing so, the 
connection is given by  

$$A(x,\zeta)d\zeta=\left(\frac{A_1(x)}{\zeta}+\frac{A_2(x)}{\zeta-1}+
\frac{A_3(x)}{\zeta-x}\right)d\zeta.$$

By the isomonodromic property the residues have to satisfy the Schlesinger's 
equations

\begin{align*}
\frac{dA_1}{dx}&=\frac{[A_1,A_3]}{x}\\
\frac{dA_2}{dx}&=\frac{[A_2,A_3]}{x-1}\\
\frac{dA_3}{dx}&=-\frac{[A_1,A_3]}{x}-\frac{[A_1,A_3]}{x-1};\end{align*}
the last one simply says that
$A_1+A_2+A_3=-A_{\infty}=constant$. Note that the functions
$\text{tr}(A_i^2)$ are constant in the deformation. We refer to ~\cite{Mal} 
for the details, see also ~\cite{Hit1, Hit2, Hit3}.

The following fact is central for our work and can be found in ~\cite{JiMi}. 
The proof is as well done by Mahoux in ~\cite{Mah}.
\begin{proposition}[Jimbo-Miwa] Let
$y(x)\in\cp^1\setminus\{0,1,x,\infty\}$ be the point at which
$A(x;y(x))$ and $A_{\infty}$ have a common eigenvector, corresponding
to the eigenvalue $\lambda$ of $A_{\infty}$ . If the $A_i$ satisfy 
Schlesinger's equations, then $y(x)$
satisfies the Painlev\'e equation P{\sc vi}$(\aaa,\b,\gamma,\delta)$
\begin{align*} 
      \frac{d^2y}{dx^2}= &\frac12\left(\frac1y+\frac1{y-1}+\frac1{y-x}\right)\left(\frac{dy}{dx}\right)^2-
                               \left(\frac1x+\frac1{x-1}+\frac1{y-x}\right)\left(\frac{dy}{dx}\right)+\\
                         &\left(\alpha+\beta\frac{x}{y^2}+
                      \gamma\frac{x-1}{(y-1)^2}+\delta\frac{x(x-1)}{(y-x)^2}\right)\frac{y(y-1)(y-x)}{x^2(x-1)^2},
\end{align*}
with parameters
\begin{align*}
\alpha &=\tfrac12(2\lambda-1)^2  &\gamma &=-2\text{det}(A_1)\\
\beta &=2\text{det}(A_0)         &\delta &=\tfrac12(1+4\text{det}(A_2)).
\end{align*}
\end{proposition}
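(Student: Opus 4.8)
The plan is to reduce the rank two Fuchsian system of the proposition to a scalar second order linear ODE, which then acquires one \emph{apparent} singularity; that apparent singularity is precisely the point $y(x)$ at which $A(x,\zeta)$ and $A_\infty$ share an eigenvector, and tracking how it moves as $x$ varies — as forced by Schlesinger's equations — produces P{\sc vi}. I would run the argument in the canonical coordinates $(y,z)$ on the space of residue data modulo conjugation, which is the route taken in \cite{JiMi} and again by Mahoux in \cite{Mah}.

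\emph{Gauge and canonical coordinates.} For generic $x$ the matrix $A_\infty=-(A_1+A_2+A_3)$ has distinct eigenvalues — this is exactly the hypothesis that $\lambda$ is a genuine eigenvalue of $A_\infty$ — so one conjugates the system by an $x$-dependent matrix to diagonalise $A_\infty$. In this gauge the shared eigenvector becomes a coordinate vector, and the condition defining $y(x)$ becomes: one off-diagonal entry $(A(x,\zeta))_{12}$ vanishes at $\zeta=y$. Writing that entry over the common denominator $\zeta(\zeta-1)(\zeta-x)$, the leading ($\zeta^2$) coefficient of the numerator is the corresponding entry of $-A_\infty$, hence $0$; so the numerator is linear in $\zeta$ and $y$ is its unique root. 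One takes as conjugate variable $z$ the value at $\zeta=y$ of the corresponding diagonal entry of $A(x,\zeta)$, i.e. the eigenvalue of $A(x,y)$ on the shared eigenvector. Since the functions $\operatorname{tr}A_i^2$, equivalently the $\det A_i$, are constants of the deformation — noted just after Schlesinger's equations — the residues $A_i$ are determined, up to the leftover diagonal gauge, by $(y,z;x)$ together with these fixed local exponents; I would write the resulting parametrization $A_i=A_i(y,z;x)$ down explicitly.

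\emph{Differentiation, elimination, and the parameters.} Using Schlesinger's equations together with the explicit $x$-dependence coming from the pole at $\zeta=x$, compute $dy/dx$ and $dz/dx$; each is rational in $y,z,x$ and the fixed exponents, with $dy/dx$ affine in $z$ and $dz/dx$ quadratic in $z$. Solving the first relation for $z$ in terms of $y,y',x$ and substituting into the expression for $dz/dx$ yields a single second order ODE for $y$, which one then matches term by term with P{\sc vi}$(\alpha,\beta,\gamma,\delta)$. The matching delivers the asserted values: $\alpha=\tfrac12(2\lambda-1)^2$ records the exponent of $A_\infty$ at $\zeta=\infty$, while $\beta,\gamma,\delta$ record the determinants of the residues at $0$, $1$ and $x$; the extra constants appearing in $\alpha$ and $\delta$ (and not in $\beta,\gamma$) are the usual asymmetric normalization of the exponents at the singularities $\infty$ and $x$.

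\emph{Main obstacle, and a shortcut.} The elimination-and-matching of the previous paragraph is a long rational computation in which signs and the labelling of the four poles must be tracked with care; a priori it only produces a second order ODE of the general P{\sc vi} \emph{shape}, and the real content is to verify that \emph{every} coefficient — of $(y')^2$, of $y'$, and the whole inhomogeneous term — comes out exactly right, with no residual $y'$-dependent piece. To keep this under control I would use the Hamiltonian formulation of Jimbo--Miwa--Ueno: the symplectic leaves of the Schlesinger flow (fixed $\det A_i$) carry a natural time-dependent Hamiltonian structure, $(y,z)$ are Darboux coordinates on them, and the Hamiltonian expressed in these coordinates is — after the parametrization above — literally the Painlev\'e VI Hamiltonian; its Hamilton equations reduce to P{\sc vi} by the classical manipulation, and $(\alpha,\beta,\gamma,\delta)$ are then read straight off. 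Since the complete calculation is carried out in \cite{JiMi} and \cite{Mah}, I would include only this reduction and refer there for the remaining algebra.
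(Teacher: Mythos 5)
The paper offers no proof of this proposition --- it is stated as a quotation from Jimbo--Miwa, with the computation deferred to \cite{JiMi} and \cite{Mah} --- so there is no in-paper argument to measure yours against. Your sketch is a correct reconstruction of the argument in those references: diagonalise $A_\infty$; observe that the common-eigenvector condition becomes the vanishing of an off-diagonal entry of $A(x,\zeta)$, whose numerator over the common denominator $\zeta(\zeta-1)(\zeta-x)$ is linear in $\zeta$ precisely because $(A_\infty)_{12}=0$, so that $y$ is well defined; pair $y$ with the conjugate variable $z$; use the constancy of $\det A_i$ along the Schlesinger flow to parametrise the residues by $(y,z;x)$ modulo the residual diagonal gauge; and eliminate $z$ (or, more cleanly, recognise the P{\sc vi} Hamiltonian in Darboux coordinates $(y,z)$). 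The parameter identifications you give are the right ones. Two caveats: first, like the paper itself, you ultimately defer the decisive rational elimination to the references, so what you have is an outline rather than a self-contained proof; second, if you do write it out, keep the labelling of the residues straight --- the Schlesinger equations in the paper call the residues at $0,1,x$ by the names $A_1,A_2,A_3$, while the parameter formulas in the proposition call the same three matrices $A_0,A_1,A_2$, and conflating the two conventions would scramble $\beta,\gamma,\delta$.
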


\section{Relations between $\nabla$ and the
Yang-Mills connection}

In this section we make some local calculations in the general set up, mostly 
to settle the notation. Many expressions will be essential in section 4.4.

\subsection{The connection 1-form in a real twistor line} 
 Take $x\in M$ with principal orbit type, and let $p^{-1}(x)$ be the 
real twistor line
  determined by it.
  Fix $z\in p^{-1}(x)\setminus Y$ and take a parallel section of 
 $\tilde E$ with respect to the flat connection in a neighbourhood of
 $z$. Denote this section by $\tilde u$. We can take a ``locally
 equivariant'' trivialization (in the sense that it is equivariant for
 a neighbourhood of the identity) of $E$ in a neighbourhood of $x$, 
 $\{s_1,s_2\}$. Let $\Phi$ be the Yang-Mills connection 1-form
 asociated to this frame.

When one considers the frame
$\{p^{\ast}s_1\big|_{p^{-1}(x)},p^{\ast}s_2\big|_{p^{-1}(x)}\}$
one has a holomorphic trivialization of $\tilde
E\big|_{p^{-1}(x)}$. With respect to it the identity
$$T\tilde u(z)=-A(T)\tilde u(z)$$
is satisfied,
$T$ being a tangent vector to the twistor line at $z$.

Let $\{\X_i\}$ be a basis of $\mathfrak g$; then there exist 1-forms $\aaa_i$, $i=1,
2, 3$ such that $T=\aaa(\sum_i\aaa_i(T)\X_i)=\sum_i\aaa_i(T)\aaa(\X_i)$. Let
$u$ be the local section of $E$ defined by $u(p(gz))=\tilde u(gz)$ for $g$
in a suitable neighbourhood of the identity of $G$. Hence, by its definition
$u$ verifies the equation
$$\X_i\cdot u=0.$$
Using the fact that $T$ is tangent to the fibre, that $\tilde u$ is
holomorphic, and that by the definition of the holomorphic structure
on $\tilde E$ one has $(p^{\ast}{\mathcal D})^{0,1}=\delbar$ (remember
that $\mathcal D$ denotes the anti-self-dual connection), one obtains that   
\begin{align*} 
     T\tilde{u}(z) &= (p^{\ast}\nabla)_T\tilde{u}(z)=
     (p^{\ast}\nabla)_{\sum_i\aaa_i(T)\aaa(\X_i)}\tilde{u}(z)=\sum_i\aaa_i(T)(p^{\ast}\nabla)_{\aaa(\X_i)}\tilde{u}(z)\\
       &= \sum_i\aaa_i(T)p^{\ast}(\nabla_{\X_i}u)(z)=\sum_i\aaa_i(T)p^{\ast}\Phi(\aaa(\X_i))\tilde{u}(z).
\end{align*}
Then we finally arrive at the following relation between the two
connections we are working with

\begin{proposition} Let $x\in M$ be a point with principal orbit type,
  $\Phi$ be the Yang-Mills connection 1-form
with respect to a locally equivariant frame, and $\aaa_i$ the 1-forms on
$P=p^{-1}(x)$ such that 
$\aaa^{-1}=\sum_{i=1}^3\aaa_i\X_i$. Then, the connection 1-form of
$\nabla\big|_P$ is given by
\begin{equation}
A_P=-\sum_{i=1}^3\aaa_i\Phi(\X_i(x)).
\end{equation}
\end{proposition}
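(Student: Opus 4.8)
The plan is to unwind the definitions already assembled in this subsection and simply assemble the pieces. The key computation has, in fact, already been carried out in the display immediately preceding the statement: starting from a parallel section $\tilde u$ of $\tilde E$ near $z$, we have $T\tilde u(z) = -A_P(T)\tilde u(z)$ by the very definition of the connection $1$-form with respect to the holomorphic frame $\{p^{\ast}s_1|_{P},\,p^{\ast}s_2|_{P}\}$. On the other hand, writing a tangent vector $T$ to the twistor line at $z$ as $T = \aaa\bigl(\sum_i \aaa_i(T)\X_i\bigr)$, linearity of $\aaa$ and of the connection, together with the fact that $(p^{\ast}\mathcal D)^{0,1} = \delbar$ so that $(p^{\ast}\nabla)_T \tilde u = T\tilde u$ on the fibre (since $\tilde u$ is holomorphic and $T$ is tangent to the fibre), gives the chain
\begin{equation*}
T\tilde u(z) = \sum_i \aaa_i(T)\,(p^{\ast}\nabla)_{\aaa(\X_i)}\tilde u(z) = \sum_i \aaa_i(T)\,p^{\ast}\bigl(\nabla_{\X_i} u\bigr)(z),
\end{equation*}
where $u$ is the local section of $E$ defined by $u(p(gz)) = \tilde u(gz)$ for $g$ near the identity.

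The next step is to identify $\nabla_{\X_i} u$ in terms of the Yang-Mills connection $1$-form $\Phi$. Here one uses that $u$ satisfies $\X_i\cdot u = 0$ (this is how $u$ was constructed: it is equivariant for a neighbourhood of the identity), so that the Lie derivative term and the covariant-derivative term separate, and the definition of $\Phi$ relative to the locally equivariant frame $\{s_1,s_2\}$ yields $\nabla_{\X_i} u = \Phi(\aaa(\X_i))\,u$ — more precisely $\nabla_{\X_i} u(x) = \Phi(\X_i(x))u(x)$, recognising $\aaa(\X_i)$ restricted to $x$ as the fundamental vector field $\X_i(x)$ on $M$. Substituting back and pulling the result up to $P$ via $p^{\ast}$ gives $T\tilde u(z) = \sum_i \aaa_i(T)\,p^{\ast}\Phi(\X_i(x))\,\tilde u(z)$. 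Comparing with $T\tilde u(z) = -A_P(T)\tilde u(z)$ and noting that this holds for every tangent vector $T$ to $P$ at every $z \in P\setminus Y$, we conclude $A_P = -\sum_{i=1}^3 \aaa_i\,\Phi(\X_i(x))$, which is exactly the asserted formula.

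The only genuinely delicate point — and the one I would be most careful to state cleanly rather than the bookkeeping — is the identification $(p^{\ast}\nabla)_T\tilde u(z) = T\tilde u(z)$ when $T$ is tangent to the fibre of $p$. This rests on two facts established earlier: first, that by definition of $\nabla$ one has $\nabla_{\aaa(\X)} = \mathcal L_{\X}$, so along the fibre the covariant derivative in the direction $\aaa(\X_i)$ is a Lie derivative of sections; and second, that the holomorphic structure on $\tilde E$ over the fibre is the Dolbeault operator $\delbar$, so a $\delbar$-parallel (i.e.\ holomorphic) section $\tilde u$ has $(p^{\ast}\nabla)^{0,1}\tilde u = 0$, while the $(1,0)$-part of the directional derivative along the fibre coincides with the genuine directional derivative $T\tilde u$. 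Once this identification is granted, everything else is the substitution described above; since $\aaa^{-1} = \sum_i \aaa_i\X_i$ holds on $P\setminus Y$ and the $\aaa_i$ are the components of the (logarithmically singular) $1$-form $\aaa^{-1}$ restricted to the line, the final expression for $A_P$ inherits the correct pole structure automatically, consistent with equation (2.4).
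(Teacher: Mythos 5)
Your argument is correct and follows essentially the same route as the paper: the paper's own ``proof'' is precisely the computation in the paragraphs preceding the proposition, namely reading off $A_P(T)$ from $T\tilde u(z)=-A_P(T)\tilde u(z)$ for a $\nabla$-parallel section in the holomorphic frame $\{p^{\ast}s_1,p^{\ast}s_2\}$, decomposing $T=\sum_i\aaa_i(T)\aaa(\X_i)$, and using the equivariance relation $\X_i\cdot u=0$ together with $(p^{\ast}\mathcal D)^{0,1}=\delbar$ to identify the derivative along the fibre with $\sum_i\aaa_i(T)p^{\ast}\Phi(\X_i(x))\tilde u(z)$. Your extra care about the identification $(p^{\ast}\nabla)_T\tilde u=T\tilde u$ for $T$ tangent to the fibre is exactly the point the paper also flags, so no gap.
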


\subsection{The anti-self-duality equation} 
The invariance of the anti-self-dual connection imposes restrictions on
the degrees of freedom of $\Phi$. For example, because the orbit space
$M/G$ is one dimensional, the anti-self-dual equation is reduced to an
ordinary differential equation in one variable. In what follows we
will use this fact to describe our connections in a more explicit way.

Remember that it is possible to take a section $c\colon(0,1)\rightarrow
M$ intersecting perpendicularly each three dimensional orbit. Take an
open set $U$ of the class of the identity in $G/\Gamma$ such that
$G\rightarrow G/\Gamma$  is trivial as a principal bundle. We can
choose a frame $\{s_1, s_2\}$ of $E$
on $(0,1)\times U$ in such a way that the following is satisfied: 
$${\mathcal D}_{\dot{c}}s_i=0,\quad\text{and}\quad g\cdot s_i(c(t))=s_i(g\cdot c(t)).$$
 Let now $\Phi$
be the connection 1-form of $\mathcal D$ with respect to such a frame; since 
the
chosen frame is equivariant equation (3.1) is satisfied. Because of the
invariance we have that $\mathcal D$ is determined (in the complement of
the exceptional orbits) by the map
$$\phi\colon(0,1)\longrightarrow\mathfrak s\otimes\mathfrak g^{\ast}$$
given by  $\phi_t(\X)=\Phi_{c(t)}(\X(c(t)))$; $\mathfrak s$ is the Lie
algebra of the structure group of $E$.

For each  $\gamma\in\Gamma$ we define a new frame $\{s_1^{\gamma},
 s_2^{\gamma}\}$ of $E$ by the expression
 $s_i^{\gamma}(x)=\gamma\cdot s_i(\gamma^{-1}x)$. The connection
 1-form of  $\gamma\cdot{\mathcal D}$ with respect to this new frame is
 given by $\Phi^{\gamma}=\gamma^{\ast}\Phi$.

 Now, if $\la_{\gamma}\colon(0,1)\times U\rightarrow S$ is the change
 of basis matrix from $\{s_i\}$ to $\{s_i^{\gamma}\}$, using the
 invariance of $\mathcal D$ we have that
$$\gamma^{\ast}\Phi=\Phi^{\gamma}=\la_{\gamma}(x)\Phi\la_{\gamma}(x)^{-1}+\la_{\gamma}(x)d\la_{\gamma}(x)^{-1}.$$
It is easy to see that $\la_{\gamma}$ is constant. First let us see that
it does not depend on $t$:
$${\mathcal D}_{\dot{c}}s_i^{\gamma}={\mathcal D}_{\dot c}(\gamma
 s_i\gamma^{-1})=(\gamma\cdot{\mathcal D})_{\dot c}s_i={\mathcal D}_{\dot c}s_i=0,$$
and on the other hand
$${\mathcal D}_{\dot
  c}s_i^{\gamma}=\sum_j\left(\frac{\partial\la_{\gamma}}{\partial
  t}\right)_{ij}s_j;$$
puting both expressions together we have
  $\frac{\partial\la_{\gamma}}{\partial t}=0$. It is staightforward to
  see that for fixed
 $t$  the equation $\la_{\gamma}(x)=\la_{\gamma}(c(t))$ is satisfied for all
  $x\in\{t\}\times U$. Then, $\la_{\gamma}$ is constant and
$$\gamma^{\ast}\Phi=\la_{\gamma}\Phi\la_{\gamma}^{-1}.$$

Applying the above result at the point $c(t)$ we obtain
\begin{equation}
\text{Ad}_{\la_{\gamma}}(\phi_t(\X))=\phi_t(\text{Ad}_{\gamma}(\X))\qquad
\forall\gamma\in\Gamma.
\end{equation}
To find the anti-self-dual equation in terms of
$\phi_t$ let us write
$$\phi_t=\phi_1\otimes\sigma_1+\phi_2\otimes\sigma_2+\phi_3\otimes\sigma_3,$$
where $\phi_i\colon(0,1)\rightarrow\mathfrak s$ and $\{\sigma_1, \sigma_2,
\sigma_3\}$ is the dual basis to the one chosen for $\mathfrak g$. The
curvature of 
$\mathcal D$ evaluated at  $c(t)$ is then given by 
\begin{align*}
F^{\mathcal D}_t&=(d\Phi)_{c(t)}-\Phi_{c(t)}\wedge\Phi_{c(t)}\\
            &=\sum_i\dot\phi_idt\wedge\sigma_i+\sum_{i<j}\left(\sum_kC^k_{ij}
                \phi_k-[\phi_i,\phi_j]\right)\sigma_i\wedge\sigma_j,\end{align*}
where $C^k_{ij}$ are the structure constants of $G$. Taking $c$ such that
$\{\dot c, \X_1, \X_2, \X_3\}$ is a positively oriented basis for
$TM$ we will have that the anti-self-dual equation $\ast
F^{\mathcal D}_t=-F^{\mathcal D}_t$ can be written as 
\begin{equation}
K_1\dot\phi_1=-\sum_{\ell=1}^3C_{23}^{\ell}\phi_{\ell}+[\phi_2,\phi_3],\ldots
\end{equation}
$K_1=\frac{||\X_2|||\X_3||||dt||}{||\X_1||}$ being the coefficient
coming from the fact that the chosen basis is not orthonormal, plus
other two equations that are obtained by cyclic permutations of the indexes 1,2,3.

\begin{Rem} As we had seen before  $\text{tr}(A_{\infty}^2)$ is
constant in the deformation. Using this fact together with equation
(3.1) we obtain that the function
\begin{equation}
\sum_{ij}\aaa_{i,\infty}\aaa_{j,\infty}\text{tr}(\phi_i\phi_j)=\text{tr}
(A^2_{\infty})=2\la^2
\end{equation}
is an integration constant for equation (3.3); we denote by 
$\aaa_{i,\infty}(t)$ the residue at $\infty$ of the 1-form $\aaa_i$ on the
line $P_t$. The same can be done with the other residues, but because of
relations (2.5) between them only two independent constants
of movement remain (in principle).
\end{Rem}

\section{Painlev\'e's equation and
$\Su_2$-invariant instantons on $S^4$}

\subsection{The $\Su_2$ action on $S^4$}

The $\Su_2$ action on $S^4$ that we will consider is that coming from
the irreducible representation in $\R^5$, which is obtained naturally by
taking the $\So_3$ conjugacy action on trace free symmetric
$3\times 3$ matrices with real coefficients. The generic orbit is
three dimensional and there are two exceptional orbits of dimension
two, then $S^4/\Su_2\cong[0,1]$; for this description consult for
example ~\cite{Gil1}. For our purposes it is more convenient to describe it from
another point of view, starting with the complex four dimensional irreducible
representation of $\Su_2$.  We identify $\C^4$ with the space of
homogeneous polynomials of degree three, in two variables, and with
complex coefficients

$$
\C^4\cong\left\{\p(\x,\y)=z_1\x^3+\sqrt3z_4\x^2\y+\sqrt3z_3\x\y^2+z_2\y^3 
: z_i\in\C, i=1,2,3,4\right\}.
$$  
The action is as usual: $\Su_2$ acts on the variables by matrix multiplication

$$ 
\begin{pmatrix}            a       &     b\\
                      -\bar b  &  \bar a
        \end{pmatrix}\cdot (\x,\y)
   = (\x,\y)\begin{pmatrix}            a    &     b\\
                            -\bar b  &  \bar a
            \end{pmatrix}^t
   =(a\x+b\y, -\bar b\x+\bar a\y),
$$
and on polynomials by precomposition $g\cdot\p(\x,\y)=\p(g^{-1}\cdot(\x,\y))$.
The above representation is quaternionic if we identify $\C^4$ with
$\Hl^2$ by the map
$$ (z_1,z_2,z_3,z_4)\longmapsto(z_1+z_2j,z_3+z_4j),
$$
and preserves the canonical bilinear form of $\Hl^2$ (this is what
the $\sqrt3$ factors  are
necessary for). We are considering $\Hl^2$ as a left $\Hl$-module.

When we consider the action on quaternionic lines we obtain the $\Su_2$
action on $S^4$ (remember the identification $\hp^1\cong S^4$ ), and
looking at the action on complex lines we obtain the action on 
$\cp^3$ (the twistor space of $S^4$). Denote by $[h_1,h_2]_{\C}$ a
point in  $\cp^3$ and by $[h_1,h_2]_{\Hl}$ the corresponding point
under twistor projection  in  $\hp^1$, where
$(h_1,h_2)\in\Hl^2$. The description of the twistor space of $S^4$ can
be found in ~\cite{At1}.

In this example the divisor $Y$ is the quartic in $\cp^3$ of cubic
polynomials having a repeated root, i.e. the zero locus of the
discriminant. The $\operatorname{SU}_2$-action on $\cp^3$ can be extended to an
$\operatorname{SL}_2(\C)$-action. Writing a polynomial in the form
$\p(\x,\y)=\y^3\q(\frac{\x}{\y})$, $\operatorname{SL}_2(\C)$ acts
in $\q$ by M\"obius transformations in its roots. There are three
different orbits, one is the dense one of dimension three $\cp^3\setminus
Y\cong\Sl/S_3$ ($S_3$ is the permutation group of three elements),  and 
the others are
of dimensions two and one. From this point of view the set $Y$ is the
union of the lower dimensional orbits.

A line in $\cp^3$ is determined by two linearly independent polynomials $\q_1,
\q_2$. The points of intersection with $Y$ correspond to those polynomials
of the form $\q_1+\zeta\q_2$ which have zero discriminant, namely,
 those for which $z$ is such that the system
\begin{align*} 
           \q_1+\zeta\q_2&=0\\
         \q'_1+\zeta\q'_2&=0
\end{align*}
has a solution. Therefore $\zeta=-\frac{\q_1(\aaa)}{\q_2(\aaa)}$, for
a root $\aaa$ of  $\q'_1\q_2-\q_1\q'_2$. For a generic line this
polynomial has degree four, with four distinct roots, each of
which gives a point of intersection of the line with $Y$. So, we have
that the divisor intersects a generic line in four different points.

The two bidimensional orbits in $S^4$ are those of the points
$x_0=[1,0]_{\Hl}$ and $y_0=[0,1]_{\Hl}$, whose stabilizer is the subgroup 
$$ \widetilde{\OO}_2=
\left\{\begin{pmatrix}  a  &    0\\
                    0  &  \bar a \end{pmatrix} : |a|^2=1\right\}\cup
                \left\{\begin{pmatrix}     0    &    b\\
                               - \bar b  &    0 \end{pmatrix} : |b|^2=1\right\}\subset\Su_2
$$
(we denote the elements of this subgroup simply by $a$ in the diagonal
case, and by $b$ in the other case), what can be easily verified. In fact
we have
\begin{align*} 
a\cdot\x^3 &=\phantom{-}\bar a^3\x^3 &\qquad a\cdot\x\y^2=\phantom{-}a\x\y^2\\
          b\cdot\x^3 &=-b^3\y^3        &\qquad b\cdot\x\y^2=-\bar
          b\x^2\y
\end{align*}

In view of this the two exceptional orbits are isomorphic to 
$\Su_2/\tilde{\OO}_2\cong\So_3/\OO_2\cong\rp^2$, and we will denote
them by $\rp^2_{\pm}$.  
The real twistor lines over the points $x_0$ and $y_0$ are given by 
$$ p^{-1}(x_0)=\left\{[h,0]_{\C} : h\in\Hl\right\}\qquad
   p^{-1}(y_0)=\left\{[0,h]_{\C} : h\in\Hl\right\},$$
and they have as stabilizer, under the $\Sl$ action, the subgroup
$$ \left\{\begin{pmatrix}  a  &    0\\
                    0  &   a^{-1} \end{pmatrix}  : a\in\C^{\ast}\right\}\cup
                \left\{\begin{pmatrix}     0    &    b\\
                               - b^{-1}  &    0 \end{pmatrix} : b\in\C^{\ast}\right\};
$$ 
what can be proved with a simple calculation. Moreover, in each of
these lines there are only two points with non discrete stabilizer,
which tells us that they intersect $Y$ in exactly two points. In
particular none of them is transversal to $Y$. The intersection can be
written explicitly in the following way:
$$p^{-1}(x_0)\cap Y=\{[1,0]_{\C};[j,0]_{\C}\}\qquad p^{-1}(y_0)\cap Y=\{[0,1]_{\C};[0,j]_{\C}\}.
$$

Observe that the stabilizer of the point $[1,0]_{\C}$ is the subgroup
of lower triangular matrices and the stabilizer of $[j,0]_{\C}$ is the
subgroup of upper triangular matrices, while that of the points
$[0,1]_{\C}$ and $[0,j]_{\C}$ is the subgroup of diagonal
matrices. Therefore, the first two are in the singular part of $Y$ and
the other two in the smooth one (i.e. in the two
dimensional orbit).

The subgroup $\Gamma=\{\pm 1,\pm i,\pm j,\pm k\}\subset\Su_2$ has
as fixed point set the maximal circle
$\Sigma:=\{[s,t]_{\Hl} : s,t\in\R\}$ (which intersects perpendicularly
all the orbits),
and we can see that 

\begin{align*} \sigma(x^+)\cap\Sigma &=\{[1,0]_{\Hl},[1,\sqrt3]_{\Hl},[-1,\sqrt3]_{\Hl}\} \\
          \sigma(x^-)\cap\Sigma
          &=\{[\sqrt3,1]_{\Hl},[0,1]_{\Hl},[-\sqrt3,1,]_{\Hl}\},
\end{align*}
where $x^+=[1,0]_{\Hl}=x_0$ and $x^-=[\sqrt3,1]_{\Hl}$.

We parametrize the segment on $\Sigma$ from $x^{+}$ to $x^{-}$ by
$c\colon [0,1]\rightarrow S^4$, given by $c(t)=[\sqrt3,t]_{\Hl}$ (this
parametrization is not geodesic but it simplifies some calculations).
This segment intersects each orbit in exactly one point and its ends
are on the exceptional orbits. For each $t$ we have a real twistor line
that we will denote by $P_t$:
$$P_t=p^{-1}(c(t))=\{[z(\x^3+t\x\y^2)+w(t\x^2\y+\y^3)]_{\C}:z,w\in\C\}\subset\C
P^3,$$
or equivalently  $P_t=\{(\sqrt3z:\sqrt3w:tz:tw)\in\C P^3 : z,w\in\C\}$.

In order to obtain the intersection $Y\cap P_t$ we do as above. The
polynomials $\q_1, \q_2$ corresponding to the line $P_t$ are
$\q_1(\x)=\x^3+t\x$, $\q_2(\x)=t\x^2+1$. Then, the intersection points
are those for which
$\zeta=-\frac{\q_1(\aaa)}{\q_2(\aaa)}$, where $\aaa$ is a root of
$$\q'_1\q_2-\q_1\q'_2=t\x^4+(3-t^2)\x^2+t.$$
The last polynomial has four roots if
$t^4-10t^2+9=(t^2-1)(t^2-9)\neq0$ and $t\neq0$, then $|P_t\cap
Y|=4$ when $t\in(0,1)$ and $|P_0\cap Y|=|P_1\cap Y|=2$. 

\subsection{Description of the invariant instantons over $S^4$}

If we want to apply the description described in section 2, and so relate solutions
to the Painlev\'e sixth equation with anti-self-dual solutions to the
Yang-Mills equation, we need to have rank two vector bundles over
$S^4$ which are $\Su_2$-equivariant with respect to the action
descibed in 4.1. On the other hand we require that they admit an ASD
invariant connection. For the construction of such bundles we refer to
~\cite{Gil1} and ~\cite{Gil2}. In the present section we will give a
description of the results. Remember that the orientation of $S^4$
considered here is the standard one of $\hp^1$, with
respect to which the Hopf bundle is self-dual ($c_2=-1$).

Suppose that $E\rightarrow S^4$ is a rank two Hermitian vector bundle
on which $\Su_2$ acts covering the action on $S^4$ described in 4.1.
Our action on $S^4$ has two exceptional orbits, of dimension two,
corresponding to the points $x^+$ and $x^-$. The fibres of $E$ over
these points have to be invariant under their respective stabilizers,
then they are two dimensional
representations of $\widetilde{\OO}_2\subset\Su_2$. By means of a simple exercise it is
possible to find all the representations of this kind, which are
classified by the integers congruent with 1 modulo 4 and zero (the
trivial representation). Restricted to the diagonal subgroup
$S^1\subset\widetilde{\OO}_2$ each non trivial representation splits
as $\C_{(r)}\oplus\C_{(-r)}$, for a positive odd integer or zero $r$,
$\C_{(r)}$  being the one dimensional representation of $S^1$ with weight
$r$. According to the previous considerations, each lifting of the
action on $S^4$ to a rank two vector bundle defines a pair of integers
$(n_{+},n_{-})$, where $n_{\pm}\equiv 1\text{mod }4,\,0$.

Conversely, it is proved that such a pair of integers uniquely determines
(modulo $\Su_2$-isomorphisms) a vector bundle $E\rightarrow S^4$
together with an $\Su_2$ action (by morphisms) covering the given
action on $S^4$. The vector bundles are obtained from the trivial one
over the open sets $U^{\pm}:=S^4\setminus\rp^2_{\pm}$ by means of a
``clutching construction''. We summarize the previous dicussion in the
following proposition. For a detailed proof consult ~\cite{Gil1}.

\begin{proposition}[Bor]  For each pair of integers $(n_+,
             n_-)$ as above, there exists a complex vector bundle 
               $E_{(n_+, n_-)}\rightarrow S^4$ with $\operatorname{SU}_2$-action,
             lifted from $S^4$, such that $n_+$ and $n_-$ are the weights 
              of the representations of the singular  stabilizers on
             the corresponding fibres $E_{x^+}$ and $E_{x^-}$.
              Moreover, the Chern number of the vector bundle
             corresponding to $(n_+, n_-)$ is given by 
              $$c_2=\frac{n_-^2-n_+^2}{8}.$$
\end{proposition}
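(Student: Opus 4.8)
The plan is to prove Proposition 4.1 (attributed to Bor) in two parts: the existence and uniqueness of the bundle $E_{(n_+,n_-)}$ with its $\Su_2$-action, and then the computation of the Chern number $c_2 = \frac{n_-^2 - n_+^2}{8}$.

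\textbf{Existence and uniqueness via clutching.} First I would set up the covering $S^4 = U^+ \cup U^-$ with $U^\pm = S^4 \setminus \rp^2_\pm$, noting that each $U^\pm$ equivariantly retracts onto the exceptional orbit $\rp^2_\mp \cong \Su_2/\widetilde{\OO}_2$ at the opposite end. Over each $U^\pm$ the bundle must be trivial (the base is equivariantly contractible to a homogeneous space, so an $\Su_2$-vector bundle there is pulled back from one over $\rp^2_\mp$, hence determined by the $\widetilde{\OO}_2$-representation on the fibre $E_{x^\mp}$). Thus building $E_{(n_+,n_-)}$ amounts to specifying the two fibre representations --- which by the ``simple exercise'' already mentioned are classified by integers $\equiv 1 \bmod 4$ or $0$, with the diagonal weight $n_\pm$ --- together with an equivariant clutching isomorphism over the overlap $U^+ \cap U^-$, which deformation-retracts onto a principal orbit $\Su_2/\Gamma$. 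I would argue that the space of such clutching functions, modulo equivariant gauge transformations, is connected (it is a homogeneous-space computation: the relevant equivariant homotopy set is a point once the fibre representations at the two ends are fixed), giving uniqueness up to $\Su_2$-isomorphism. I expect this part to be essentially formal once the two local classification exercises are granted, so I would keep it brief and refer to \cite{Gil1} for the details, as the statement already does.

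\textbf{The Chern number.} This is the substantive computation, and the cleanest route is to reduce $c_2$ to a relative/equivariant Chern class supported near the two exceptional orbits. Since $E_{(n_+,n_-)}$ is trivial on each $U^\pm$, the second Chern class is entirely concentrated near $\rp^2_+ \cup \rp^2_-$, and by additivity $c_2 = c_2^+ - c_2^-$ where $c_2^\pm$ is a contribution localized at $\rp^2_\pm$ depending only on the local model there. The local model near $\rp^2_\pm$ is an $\Su_2$-equivariant rank-two bundle over the disc bundle of the normal bundle of $\rp^2_\pm$, with fibre the $\widetilde{\OO}_2$-representation of weight $n_\pm$; I would compute its contribution by restricting to the invariant 2-sphere slice transverse to $\rp^2_\pm$ (in the notation of the paper, the real twistor line / the $c(t)$-segment picture near the endpoint), where an $\SU_2$-equivariant bundle with fibre weight $n_\pm$ carries a standard $U(1)$-reduction whose degree is a quadratic function of $n_\pm$. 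Matching the normalization to the quoted fact that the Hopf bundle (the case forcing $c_2 = -1$) corresponds to the pair giving $n_-^2 - n_+^2 = -8$ pins the constant, yielding $c_2^\pm = \frac{n_\pm^2}{8}$ and hence the formula. Alternatively one can feed $(n_+,n_-)$ through the explicit ADHM/monad data in \cite{Gil1,Gil2} and read off $c_2$ directly.

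\textbf{Main obstacle.} The genuine difficulty is the Chern-number computation: getting the \emph{quadratic} dependence $n_\pm^2/8$ with the correct rational coefficient, rather than a linear term, requires being careful about how the weight-$n_\pm$ representation of $\widetilde{\OO}_2$ (which splits as $\C_{(n_\pm)} \oplus \C_{(-n_\pm)}$ over $S^1$) interacts with the self-intersection of $\rp^2_\pm$ in $S^4$ and with the chosen orientation (the paper is explicit that it uses the $\hp^1$ orientation making the Hopf bundle self-dual with $c_2 = -1$, precisely because a sign/orientation slip is easy here). Everything else --- the local triviality over $U^\pm$, the connectedness of clutching data, the reduction of $c_2$ to a sum of two localized terms --- is standard equivariant topology that I would state and cite rather than belabor.
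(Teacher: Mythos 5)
The paper does not actually prove this proposition: it is attributed to Bor, the surrounding text merely summarizes the classification of the fibre representations at $x^{\pm}$ and the clutching construction, and the reader is sent to \cite{Gil1} for the details. Your existence/uniqueness outline (equivariant retraction of $U^{\pm}$ onto the opposite exceptional orbit, classification of the $\widetilde{\OO}_2$-representations, connectedness of the equivariant clutching data over a principal orbit) is exactly the paper's outline, so on that part there is nothing to compare beyond the fact that both you and the paper defer to \cite{Gil1}.

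The one substantive step you add, the Chern number, contains a concrete non sequitur. From ``$E_{(n_+,n_-)}$ is trivial on each $U^{\pm}$'' it does not follow that $c_2$ is concentrated near $\rp^2_+\cup\rp^2_-$; with trivializations over $U^+$ and $U^-$ the class $c_2$ is carried entirely by the clutching map on the overlap $U^+\cap U^-$, which deformation retracts onto a \emph{principal} orbit $\Su_2/\Gamma$ --- that is, it is concentrated away from the exceptional orbits, and this is where the denominator $8=|\Gamma|$ naturally enters (the clutching map lives on a quotient of $S^3$ by the order-eight quaternion group). To localize $c_2$ at the exceptional orbits as you propose, you would instead need an invariant connection that is flat over the principal stratum --- which does exist here, since over that stratum $E$ is equivariantly $(0,1)\times(\Su_2\times_{\Gamma}V)$ for a two-dimensional representation $V$ of $\Gamma$ --- and then run Chern--Weil near the two ends; but that is a different justification from the one you wrote. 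Separately, pinning the quadratic coefficient by ``matching to the Hopf bundle'' is circular as stated, since identifying which pair $(n_+,n_-)$ the Hopf bundle carries already presupposes the correspondence you are normalizing; you need to extract the coefficient of $n_\pm^2$ directly from the weight data (or from the ADHM/monad description, as you mention in the alternative), not fit it to the answer.
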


It is possible to prove that those vector bundles for which $n_+>1$
and $n_->1$ do not admit any self-dual or anti-self-dual
$\Su_2$-invariant connection (~\cite{Gil1}). But using an equivariant
version of the ADHM construction one can obtain the following
statement ~\cite{Gil3}:

\begin{proposition} For each  $n=1\text{{\rm mod} }4$ there exists a unique
$\Su_2$-invariant anti-self-dual connection on
              $E_n:=E_{(1, n)}$ {\rm (}i.e., for $n_+=1${\rm )}.
\end{proposition}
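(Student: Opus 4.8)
The plan is to realize the connection via an equivariant ADHM-type construction and then extract uniqueness from the rigidity of the symmetry data. First I would recall the description of $S^4$-instantons with structure group $\Su_2$ of charge $k=\tfrac{n^2-1}{8}$ (the vector bundle being $E_{(1,n)}$ by Proposition 4.?): the moduli space of charge-$k$ instantons is a manifold of dimension $8k-3$, and the $\Su_2$-action on $S^4$ (lifted to $E_{(1,n)}$ compatibly) induces an action on this moduli space. An invariant instanton is exactly a fixed point of this action. So the two assertions --- existence and uniqueness --- amount to showing that the fixed-point set of the induced $\Su_2$-action on the charge-$k$ moduli space, restricted to the component with the prescribed equivariant structure $(n_+,n_-)=(1,n)$, consists of a single point.

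For existence I would use the equivariant ADHM data. Concretely: an $\Su_2$-invariant instanton corresponds to ADHM matrices carrying an $\Su_2$-action intertwining with the defining representations, which by Schur's lemma forces the ADHM vector space to decompose into a specific sum of irreducible $\Su_2$-modules dictated by the weights $n_+=1$ and $n_-=n$ at the two exceptional orbits. The constraint $n_+=1$ is what makes the decomposition essentially unique --- it pins down which irreducibles occur and with what multiplicities --- and the ADHM equations then become a small finite system which one checks has a (real, nondegenerate) solution. This is where I would invoke \cite{Gil3} for the actual construction; the point I want to stress in the plan is that the representation-theoretic bookkeeping at $x^+$ and $x^-$ leaves no free parameters once $n_+=1$.

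For uniqueness the key step is that any two invariant solutions would be related by an element of the (finite-dimensional) gauge-equivalence, and the ADHM data being rigid (no moduli once the $\Su_2$-structure is fixed and $n_+=1$) forces them to coincide. Equivalently, one shows the fixed-point component is zero-dimensional by computing the $\Su_2$-invariant part of the deformation complex (the invariant part of $H^1$ of the instanton deformation complex vanishes) and then that it is connected. I expect the main obstacle to be precisely this last point: showing the fixed-point set is not merely zero-dimensional but a single point, i.e. ruling out several isolated invariant instantons with the same $(n_+,n_-)$. The $n_+=1$ hypothesis should be used essentially here --- it forces the stabilizer action on $E_{x^+}$ to be the standard representation, which rigidifies the clutching data of Proposition 4.? and leaves room for only one consistent extension over the whole of $S^4$. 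I would close the argument by noting that for $n_+>1,\ n_->1$ no invariant ASD connection exists (\cite{Gil1}), so the case $n_+=1$ is genuinely the borderline one where existence holds and the construction is forced.
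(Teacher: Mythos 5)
The paper does not actually prove this proposition: it is imported wholesale from the equivariant ADHM construction of \cite{Gil3} (``using an equivariant version of the ADHM construction one can obtain the following statement''). Your outline follows exactly that route --- equivariant ADHM data, Schur-lemma decomposition of the ADHM vector space forced by the weights $(n_+,n_-)=(1,n)$, rigidity of the resulting finite system --- so in spirit you are reconstructing the argument of the cited reference rather than proposing an alternative.

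That said, as a standalone proof your proposal has two genuine gaps, both of which you flag yourself but neither of which you close. For existence, the assertion that the ADHM equations ``become a small finite system which one checks has a (real, nondegenerate) solution'' is precisely the content of the proposition; without exhibiting the decomposition of the ADHM vector space into $\Su_2$-irreducibles and verifying that the moment-map-type ADHM constraints admit a solution in the invariant locus, nothing has been established. For uniqueness, you correctly identify that zero-dimensionality of the invariant deformation space (vanishing of the $\Su_2$-invariant part of $H^1$ of the deformation complex) does not by itself rule out several isolated invariant instantons with the same $(n_+,n_-)$, and you name the connectedness of the fixed-point set as ``the main obstacle'' --- but you then leave it unresolved, appealing only to the heuristic that $n_+=1$ ``rigidifies the clutching data.'' The clutching data classify the equivariant \emph{bundle}, not the invariant \emph{connection}, so this heuristic does not substitute for an argument. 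Since both halves of the statement are deferred to \cite{Gil3}, the proposal in the end proves no more than the citation already does; that is consistent with how the paper treats the result, but it should be presented as a citation, not as a proof.
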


And these vector bundles are just what we needed in order to apply the
general construction mentioned in section 2, since as we have seen, 
the lines in
the family $P_t$ intersect the divisor in four different points for
$t\in(0,1)$, and then the hypotheses of proposition 2.5 are satified.

\subsection{The parameters of P{\sc vi}}

Using the $\Sl$ action it is possible to calculate  the parameters of
P{\sc vi} corresponding to the invariant instantons described in the
previous section in a geometric way, though they can be found also
by means of the expression (3.4) as we will see later.
Let $z\in Y^{\circ}$, $U$ be a trivializing neighbourhood of $E_n$ and
$(z_1,z_2,z_3)$ be coordinates on $\C P^3$ such that 
$Y$ is defined by  $z_1=0$. Consider a rational curve $\xi\colon\C
P^1\rightarrow \C P^3$  transversal to $Y$ at $z$. The 1-form
$\xi^{\ast}A$, defined on  $\xi^{-1}(U\setminus Y)$, has logarithmic
pole at $\xi^{-1}(z)$ with residue $B(z)$, i.e. equal to the residue
of $A$ evaluated at the point $z$. Thus, to find the residue of $A$ at
a given point it is sufficient to know $A$ restricted to a rational
line, transversal to $Y$ at that point. To find the parameters of
Painlev\'e's equation we have to calculate the residues, or
equivalently the trace of its squares. Observe that if $z\in
Y^{\circ}$ and $g\in \Sl$, then the residue of the connection at the
point $gz$ is conjugate to the residue at $z$ since as the connection is
$\Sl$-invariant  we can take a trivialization of $E_n$ around $gz$
for the connection 1-form to be $g^{\ast}A$; thus, the trace of
the squares of the residues does not depend on the point of
$Y^{\circ}$ at which we calculate it. From these two observations we conclude
that it is sufficient to take a point in $Y^{\circ}$, a twistor line
transversal to $Y^{\circ}$ at that point, and calculate the residue of
the connection restricted to the line at such point. For that we will
fix $z_0=[0,1]_{\C}\in Y^{\circ}$; the stabilizer of 
$z_0$, as was mentioned before, is $\left\{\left(\smallmatrix a & 0\\
0&\frac 1a\endsmallmatrix\right) : a\in\C^{\ast}\right\}\cong\C^{\ast}$.
The group $\C^{\ast}$ acts on $\C^4$ in the following way
\begin{align*} 
            a\cdot\x^3 &=a^{-3}\x   &\qquad a\cdot\x^2\y =a^{-1}\\
         a\cdot\x\y^2  &=a\x\y^2    &\qquad \phantom{\x}a\cdot\y^3=a^3\y^3. 
\end{align*}
When we projectivize we have three lines, invariant under the
$\C^{\ast}$ action, that pass through $z_0$; these lines are
$\zeta \mapsto [\x\y^2+\zeta \x^2\y]$, $\zeta
\mapsto[\sqrt3\x\y^2+\zeta \y^3]$ and $\zeta \mapsto[\sqrt3\x\y^2+\zeta
\x^3]$ (with $\zeta \in\C P^1\setminus\{\infty\}\cong\C$). The first
two have weights -2 and 2 respectively, so they correspond to tangent
lines to $Y^{\circ}$, and the third one,
which is then transversal, has weight -4.

For the calculation we need to know the $\C^{\ast}$-vector bundles
over $\C$, i.e. the holomorphic vector bundles over $\C$ that come
with a $\C^{\ast}$ action. The classification of these objects is
simple and is given by the following lemma that can be found in ~\cite{At2}.

\begin{lemma}[Atiyah] Let $E$ be a $\C^{\ast}$-holomorphic vector
                      bundle over $\C$. Then $E\cong\C\times E_0$ with
              the diagonal action on $\C\times E_0$:
              $\lambda\cdot(w,v)=(\lambda^k w,\lambda\cdot v)$.
\end{lemma}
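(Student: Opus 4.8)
The statement to be proved is Atiyah's classification of $\C^{\ast}$-equivariant holomorphic vector bundles over $\C$: any such bundle is equivariantly isomorphic to the trivial bundle $\C \times E_0$ with a linear diagonal action, where $E_0$ is the fibre over the origin. My plan is to first forget the equivariance and observe that any holomorphic vector bundle over $\C$ is holomorphically trivial --- this is standard, since $\C$ is a Stein manifold of complex dimension one (or, more elementarily, because $H^1(\C,\mathcal{O}^{\ast})=0$). So fix a holomorphic trivialization $E \cong \C \times E_0$. The point is then to \emph{upgrade} this trivialization to an equivariant one by an averaging/linearization argument at the fixed point $0 \in \C$.

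Concretely, once we have identified $E$ (as a holomorphic bundle, not yet equivariantly) with $\C \times E_0$, the $\C^{\ast}$-action becomes a holomorphic map $\Phi\colon \C^{\ast} \times (\C \times E_0) \to \C \times E_0$ covering the standard action $\lambda\cdot w = \lambda w$ on the base. Thus for each $\lambda$ it has the form $\Phi(\lambda)(w,v) = (\lambda w,\, g(\lambda,w)v)$ for some holomorphic map $g\colon \C^{\ast} \times \C \to \mathrm{GL}(E_0)$ satisfying the cocycle identity $g(\lambda\mu, w) = g(\lambda, \mu w)\, g(\mu, w)$. The strategy is to define a new trivialization of $E$ by a holomorphic gauge transformation $h\colon \C \to \mathrm{GL}(E_0)$ chosen so that in the new frame $g$ becomes independent of $w$, i.e. $g(\lambda,w) = \lambda^{A}$ for a fixed endomorphism (in fact, after a further linear change, a diagonal one). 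The natural candidate is to compare $g(\lambda,w)$ with $g(\lambda,0)=:\rho(\lambda)$, which is a genuine one-parameter holomorphic representation of $\C^{\ast}$ on $E_0$, hence of the form $\rho(\lambda)=\lambda^{A}$ with $A$ a semisimple endomorphism with integer eigenvalues; one then sets $h(w) = \lim_{\lambda\to 0}\rho(\lambda)^{-1} g(\lambda, w)$ (or extracts it from the Laurent/Taylor expansion in $\lambda$), checks using the cocycle identity that this limit exists, is holomorphic in $w$, invertible, and conjugates $g(\lambda,w)$ to $\rho(\lambda)$. Decomposing $E_0$ into eigenspaces of $A$ then gives the stated diagonal form $\lambda\cdot(w,v)=(\lambda^{k}w,\lambda\cdot v)$ on each summand.

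The main obstacle is precisely controlling this limit/linearization step: one must show that the weights of the $\C^{\ast}$-action near the fixed point $0$ are non-negative on the base direction (so that $\lambda \to 0$ makes sense as an attracting limit) and that the fibrewise part $g(\lambda,w)$ does not blow up --- this is where equivariance of the \emph{holomorphic} structure, as opposed to merely the smooth one, is essential, and where one genuinely uses that the action on the base is the standard scaling. A clean way to organize it is to expand everything in power series: write $g(\lambda, w) = \sum_{n\geq 0} g_n(\lambda) w^n$ with $g_n$ holomorphic on $\C^{\ast}$, feed this into the cocycle identity $g(\lambda\mu,w)=g(\lambda,\mu w)g(\mu,w)$, and read off recursively that each $g_n(\lambda)$ is a Laurent polynomial in $\lambda$ of a controlled degree, with $g_0(\lambda)=\rho(\lambda)=\lambda^{A}$; the gauge transformation $h$ is then built from finitely many of these coefficients. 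Since the paper only needs the \emph{existence} of such a normal form (it is invoked merely to compute a residue after restricting to the transversal weight $-4$ line), I would not belabour the recursion and would simply cite \cite{At2} for the details, presenting the power-series cocycle computation as the skeleton of the argument.
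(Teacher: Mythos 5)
The paper offers no proof of this lemma at all---it is quoted from Atiyah's paper---so your argument has to stand on its own, and its central mechanism does not work. The skeleton is fine: trivialize $E$ holomorphically over the Stein, contractible base, write the action as $(w,v)\mapsto(\lambda^k w, g(\lambda,w)v)$ with the cocycle identity $g(\lambda\mu,w)=g(\lambda,\mu^k w)g(\mu,w)$, and note that $\rho(\lambda):=g(\lambda,0)=\lambda^A$ with $A$ semisimple with integer weights $a_j$. But the limit $h(w)=\lim_{\lambda\to 0}\rho(\lambda)^{-1}g(\lambda,w)$ need not exist. Take $k=1$, $E_0=\C^2$, $\rho(\lambda)=\operatorname{diag}(\lambda^{5},1)$ and
$$g(\lambda,w)=\begin{pmatrix}\lambda^{5}&(\lambda-\lambda^{5})w\\ 0&1\end{pmatrix};$$
this satisfies the cocycle identity, yet $\rho(\lambda)^{-1}g(\lambda,w)$ has upper-right entry $(\lambda^{-4}-1)w$, which blows up as $\lambda\to 0$. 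The bundle is nevertheless equivariantly trivial: $h(w)=\left(\begin{smallmatrix}1&-w\\ 0&1\end{smallmatrix}\right)$ solves $g(\lambda,w)=h(\lambda w)^{-1}\rho(\lambda)h(w)$. In general the $w^1$-coefficient of $g$ in the $(i,j)$ weight block is $c\,(\lambda^{k+a_j}-\lambda^{a_i})$, and your limit exists only if every exponent $k+a_j-a_i$ is nonnegative. Your ``main obstacle'' paragraph asks for exactly this positivity, but it is not a hypothesis of the lemma and it fails in the paper's own application, where the base weight is $k=-4$ and the fibre weights are $\pm n$: the weight differences are large and of both signs, so your proof would break precisely on the bundles $E_n$ with $n>1$.

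The correct route is the one you mention only parenthetically: solve $g(\lambda,w)=h(\lambda^k w)^{-1}\rho(\lambda)h(w)$ order by order in $w$. The coefficient $(h_m)_{ij}$ of $w^m$ enters multiplied by $\lambda^{a_i}-\lambda^{mk+a_j}$, so it absorbs the corresponding term of $g$ for either sign of $mk+a_j-a_i$, as long as that exponent is nonzero. Two points you pass over are where the content lies. First, at a resonance $mk+a_j=a_i$ the homogeneous functional equation $F(\lambda\mu)=F(\lambda)+F(\mu)$ admits $c\log\lambda$, and one must use single-valuedness of $g$ in $\lambda\in\C^{\ast}$ (i.e., that one has a genuine $\C^{\ast}$-action, not merely a holomorphic vector field, for which the analogous linearization can genuinely fail) to kill the logarithmic terms. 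Second, $g$ need not be polynomial in $w$, so $h$ is \emph{not} ``built from finitely many of these coefficients''; it is an infinite series whose convergence requires the standard majorant estimate for a regular singular point (the equation $w h'=Ah-hB$ you are implicitly solving). With those two repairs the argument closes; as written, it has a genuine gap.
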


We are now in position to establish the following theorem that says
which are the parameters of P{\sc vi} related to the family of vector
bundles described in 4.2.

\begin{theorem} Let $\{E_n: n\,\text{odd}\}$ be the family of
            $SU_2$-invariant instantons over $S^4$ 
            described in 4.2. Then, the Painlev\'e equation
            corresponding to the isomonodromic deformation obtained
            from the Penrose-Ward transform of $E_n$, with the
            connection defined by the action, has the following parameters:
            $$\aaa^{\pm}=\tfrac 18(n\pm 2)^2,\quad\b=-\tfrac
            18n^2,\quad\gamma=\tfrac 18n^2,\quad\delta=-\tfrac
            18(n^2-1).$$   
\end{theorem}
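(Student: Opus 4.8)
The plan is to compute the three residues $A_1, A_2, A_3$ (equivalently $A_0, A_1, A_2$ in the notation of Proposition~2.7) of the connection $\nabla\big|_{P}$ on a single conveniently chosen twistor line, exploit $\Sl$-invariance to reduce everything to the fixed point $z_0=[0,1]_{\C}$, and then read off $\aaa^{\pm},\b,\gamma,\delta$ from the Jimbo--Miwa formulas. By the discussion preceding the statement, the trace of the square of each residue is independent of the point of $Y^{\circ}$ chosen, so it suffices to understand the local behaviour of $\nabla$ near $z_0$. Concretely, I would take the three $\C^{\ast}$-invariant lines through $z_0$ described in~4.3, of weights $-2$, $2$ and $-4$, restrict the Penrose--Ward bundle $\tilde E_n$ to each of them, and apply Atiyah's Lemma~4.5 to trivialize it equivariantly as $\C\times E_0$ with $\lambda\cdot(w,v)=(\lambda^k w,\lambda\cdot v)$. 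The key observation is that under such a trivialization the connection $\nabla$, whose horizontal distribution is the image of $\tilde\aaa$ (Proposition~2.2), has connection $1$-form equal to $-\dot\rho\circ\aaa^{-1}$ in the spirit of Example~2.1, so the residue along the transversal line is computed directly from the infinitesimal generator of the $\C^{\ast}$-action on the fibre $E_0$ together with the weight $k$ with which $\C^{\ast}$ acts on the base coordinate.

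The main input I need is the weight $k$ of the $\C^{\ast}$-action on the fibre $(E_n)_{z_0}$, and the weights on the fibres of $\tilde E_n$ over the two tangent directions. These are obtained from Proposition~4.2 (Bor): the action on $S^4$ lifts to $E_{(1,n)}$ so that the singular stabilizers act with weights $1$ and $n$ on $E_{x^+}$ and $E_{x^-}$; restricting to the diagonal $\C^{\ast}\subset\Sl$ stabilizing $z_0$ and tracking which exceptional orbit $z_0$ projects to, the representation on $(E_n)_{z_0}$ splits into two weight spaces whose weights differ by $n$ (up to the normalization fixed by the $\C^{\ast}$-action on $\C^4$ recorded in~4.3: $a\cdot\x^3=a^{-3}\x^3$, etc.). The first step is therefore to pin down these fibrewise weights precisely; once that is done, the residue $A_3$ at $z_0$ (coming from the weight $-4$ transversal line) is a diagonal traceless $2\times 2$ matrix with eigenvalues determined by $n$ and the weight $-4$, so that $\mathrm{tr}(A_3^2)$, and hence one of the Painlev\'e parameters, follows by a short computation. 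The eigenvalue $\la$ of $A_{\infty}$ needed for $\aaa^{\pm}=\tfrac12(2\la-1)^2$ is read off from the same weight data, giving the two choices of sign according to which eigenvector of $A_{\infty}$ one selects.

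For the remaining residues I would either repeat the local analysis at the other two points of $P_t\cap Y$ — using that the stabilizers there are the upper- and lower-triangular subgroups, so the same $\C^{\ast}$-weight bookkeeping applies after conjugation — or, more economically, use the constraint $A_1+A_2+A_3=-A_{\infty}$ together with the reality relations~(2.5) $A_2=-A_1^{\ast}$, $A_4=-A_3^{\ast}$ to cut down the unknowns, as in Remark~3.1, leaving essentially the three independent quantities $\mathrm{tr}(A_0^2),\mathrm{tr}(A_1^2),\mathrm{tr}(A_2^2)$. Substituting into the Jimbo--Miwa parameter formulas of Proposition~2.7, $\b=2\det(A_0)$, $\gamma=-2\det(A_1)$, $\delta=\tfrac12(1+4\det(A_2))$, and using $\det = -\tfrac12\mathrm{tr}(\cdot^2)$ for traceless $2\times2$ matrices, yields the claimed values $\b=-\tfrac18 n^2$, $\gamma=\tfrac18 n^2$, $\delta=-\tfrac18(n^2-1)$.

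I expect the main obstacle to be the careful determination of the fibrewise $\C^{\ast}$-weights and the correct normalizations — reconciling the weight conventions of Proposition~4.2 (the integers $n_\pm\equiv1\bmod 4$) with the explicit $\C^{\ast}$-action on $\C^4$ in~4.3 and with Atiyah's model in Lemma~4.5, and making sure the factor of $\tfrac14$ coming from the weight $-4$ of the transversal direction is placed correctly. A secondary subtlety is justifying that the residue of $\xi^{\ast}A$ genuinely equals the intrinsic residue $B(z)$ of $A$ along $Y$ independently of the transversal $\xi$, which is the content of Remark~3.2 applied to $\xi$ transversal to $Y$; once this is in hand the computation is a finite linear-algebra exercise with no analytic difficulties.
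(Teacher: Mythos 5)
Your proposal follows essentially the same route as the paper: reduce to the single point $z_0=[0,1]_{\C}$ by $\Sl$-invariance of the connection (so that $\operatorname{tr}(A_i^2)$ is the same at every point of $Y^{\circ}$), apply Atiyah's lemma to the weight $-4$ transversal $\C^{\ast}$-invariant line to get the residue $\operatorname{diag}(n/4,-n/4)$ from horizontality of the generating field, and substitute $\operatorname{tr}(A_i^2)=n^2/8$ and the eigenvalues $\pm n/4$ of $A_{\infty}$ into the Jimbo--Miwa formulas. Two small slips do not affect the argument: the fibre weights at $z_0$ are $\pm n$ (they differ by $2n$, not by $n$ --- exactly the normalization you flag), and the points with upper/lower triangular stabilizers lie in the \emph{singular} part of $Y$, so your alternative of repeating the local analysis at those points is not the relevant one for a generic line --- but the $\Sl$-invariance observation you open with already supplies the remaining residues.
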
 

\begin{proof} Let $\xi\colon\C P^1\rightarrow Z$ be the rational curve given by
            the expression $\xi(t)=[\zeta,1]_{\C}$; we know that it is
 transversal to $Y$  at the point $z_0=[0,1]_{\C}$. Restricting the
 vector bundle to the line we obtain a $\C^{\ast}$-holomorphic vector
 bundle over $\cp^1$, and in view of the previous lemma we have that
$$
(\xi\big|_{\C P^1\setminus\{\infty\}})^{\ast}E\cong\C\times E_{n,z_0}
$$
with the $\C^{\ast}$ action given by
 $\lambda\cdot(\zeta,v)=(\lambda^{-4}\zeta,\lambda\cdot v)$. Moreover,
 by the construction,  $E_{n,z_0}$ is a representation of $\C^{\ast}$
 with weights $-n, n$; therefore, taking a base of eigenvectors, we
 have a $\C^{\ast}$-isomorfism between $\C\times E_{n,z_0}$ and
 $\C\times\C^2$ with the action $\lambda\cdot(\zeta,\lambda_1,\lambda_2)=(\lambda^{-4}\zeta,\lambda^{-n}
\lambda_1,\lambda^n\lambda_2)$. Then the holomorphic field generated
 by the action can be expressed as follows
$$
X=-4\zeta\frac{\partial}{\partial \zeta}-n\lambda_1\frac{\partial}{\partial \lambda_1}+n\lambda_2
    \frac{\partial}{\partial \lambda_2}.
$$
Recall that this field has to be horizontal because of the definition of
the connection. Now, the 1-forms that anihilate the horizontal
subspace are
$$
\theta_{\aaa}=d\lambda_{\aaa}+\sum_{\b}\lambda_{\b}A_{\aaa\b},\qquad\aaa=1,2;
$$
(see proof of theorem 2.1 in ~\cite{AHS}). 
 Thus we have that 
$$\quad0=\theta_{\aaa}(X)=(-1)^{\aaa}\lambda_{\aaa}-4\zeta\sum\limits_{\b}\lambda_{\b}A_{\aaa\b}(\zeta),$$ 
from what we obtain 
$A_{12}=A_{21}=0$ and $A_{11}=-A_{22}=\dfrac{n}{4\zeta}$, and we
conclude that the invariant we wanted to find is
$$
\operatorname{tr}(\operatornamewithlimits{Res}_{z_0}A)^2=\tfrac 18n^2.
$$ 
This, together with proposition 2.5 finishes the proof.
\end{proof}

In the next section we will see another proof of the previous theorem, based 
on more explicit calculations; see remark 4.1.

\subsection{Some explicit calculations}

In this section we will calculate explicitly the connection form of
the flat connection, restricted to the family of lines we are
considering, in terms of the Yang-Mills connection. This will give us
a better understanding of what is happening and how to obtain the
solution to  Painlev\'e's equation from the solution to the
anti-self-dual equation.

Remember that for any real twistor line $P$ we can write the flat
connection in the following way (see equation (3.1))
$$A(T)=-\sum_{i=1}^{3}\aaa_i(T)p^{\ast}\Phi(\aaa(\X_i)),$$
where $\{\X_1,\X_2,\X_3\}$ is a base for $\su_2$, $T$ is a tangent
vector to $P$, $\Phi$ is the Yang-Mills connection 1-form on $S^4$,
and the 1-forms $\aaa_i$ are defined by
$$\aaa^{-1}(T)=\sum_{i=1}^3\aaa_i(T)\X_i.$$
The map $\phi\colon(0,1)\rightarrow\su_2\otimes\su_2$ defined in
section 3.2,  which determines the flat connection, has to be of the form
$$\phi_t=a_1(t)\X_1\otimes\sigma_1+a_2(t)\X_2\otimes\sigma_2+a_3(t)\X_3\otimes\sigma_3$$
because of its invariance under the $\Gamma$ action (see equation (3.2)),
with $a_i\colon(0,1)\rightarrow\R$.

The (anti-)self-dual equation can be written in terms of the $a_i$'s
 in the form
\begin{equation}
(-)\frac12 K_1\dot{a}_1=a_2a_3-a_1,
\end{equation}
plus other two equations obtained by cyclic permutation of the indexes
1,2,3. The coefficients $K_i$, coming from the fact that the basis
formed by the generating vectors of the action and $\dot c$ is not
orthonormal, are given by the expressions
$$K_1(t)=\frac{(t^2-1)(t^2-9)}{4t},\quad K_2(t)=4t\frac{(t-3)(t+1)}{(t+3)(t-1)}
,\quad
K_3(t)=4t\frac{(t+3)(t-1)}{(t-3)(t+1)}.$$
In fact, the functions $a_i$ can be extended to all of $\Sigma$ (as in
~\cite{Gil2}) and they satisfy the boundary conditions: $a_1(0)=1,
a_2(1)=-n, a_1(1)=a_3(1)=0$ ~\cite{Gil2}.

\begin{example} For the trivial bundle $E_1$ the anti-self-dual connection is
determined by the functions $a_1(t)=a_2(t)=a_3(t)=1$ (of course, it is
also self-dual).\end{example}

\begin{example} The Hopf bundle, which is self-dual, has the $a_i$'s given by
the expressions
$$a_1(t)=\frac{t^2-9}{t^2+3}, \qquad
a_2(t)=-2t\frac{(t+3)}{t^2+3},\qquad a_3(t)=-2t\frac{(t-3)}{t^2+3}.$$
Applying the antipodal map one sees that the anti-self-dual form on
$E_{-3}$ is thus given by 
$$a_1(t)=3\frac{(1-t^2)}{t^2+3}, \qquad
a_2(t)=-6\frac{(t+1)}{t^2+3},\qquad a_3(t)=-6\frac{(t-1)}{t^2+3}.$$
\end{example}
Our next step is to find $\aaa^{-1}$. In order to do so we take homogeneous
 coordinates on $\cp^3$
 $(\la,\mu,\zeta)=(\frac{z_1}{z_2},\frac{z_3}{z_2},\frac{z_4}{z_2})$,
in the open set $\{z_2\neq 0\}$,
associated with which we have a tangent basis on $\cp^3$
\newcommand{\deriva}[1]{\frac{\partial}{\partial#1}}
that we denote by
$\left\{\deriva\la,\deriva\mu,\deriva\zeta\right\}$. As a basis for
$\su_2$ we take
$$ \X_1=\begin{pmatrix} i & 0\\
                0 & -i\end{pmatrix}\qquad \X_2=\begin{pmatrix} 0 & 1\\
                                                    -1 &0 \end{pmatrix}
\qquad \X_3=\begin{pmatrix} 0 & i\\
                    i & 0\end{pmatrix}. 
$$
As an intermediate step it is convenient to take a complex basis for 
$\mathfrak{sl}_2(\C)$ defined by
$\left\{-i\X_1, \frac12(\X_2+i\X_3),-\frac12(\X_2-i\X_3)\right\}$,
with respect to which the matrix associated to $\aaa$ at the point
$(\la,\mu,\zeta)$ is 

\begin{equation}
\begin{pmatrix} -6\la   &   -\sqrt3\la\mu   & \sqrt3\zeta \\
           -2\mu   &     \sqrt3        &  2\zeta-\sqrt3\mu^2  \\
           -4\zeta &      2\mu         &  \sqrt3\la-\sqrt3\mu\zeta
	   \end{pmatrix}.
\end{equation}

In homogeneous coordinates our family of lines is parametrized by 
$(\la,\frac{1}{\sqrt3}t\la,\frac{1}{\sqrt3}t)$, and so the vector 
 $T=\deriva\la+\frac{t}{\sqrt3}\deriva\mu$ is tangent to $P_t$, and
 consequently, using (4.2), we can easily obtain that
\begin{equation}
 \begin{split}
\aaa^{-1}(T)=-\frac{1}{3\Delta(t,\la)}\bigl( &i(t^2-1)(t^2-9)\frac{\la}{2} 
\X_1-t(t+1)(t-3)(\la^2+1)\X_2\\
                                          &+it(t-1)(t+3)(1-\la^2)\X_3\bigr),
 \end{split}
\end{equation}
where
$\Delta(t,\la)=\frac13(8t^3\la^4-2(t^4+18t^2-27)\la^2+8t^3)=-\operatorname{det}(\aaa)(\la,\la
t/\sqrt3,t/\sqrt3)$.
From the above expression we can find without difficulty the
intersection points of the divisor with each line, which are the poles
of the flat connection. They are given by the equation
$\Delta(t,\la)=0$ for each fixed $t$; from which it follows that
$$P_t\cap Y=\{\pm\sqrt{\mu_{\pm}}\},\quad\text{where } 
            \mu_{\pm}=\frac{t^4+18t^2-27\pm\sqrt{(t^2-1)(t^2-9)^3}}{8t^3}. 
$$
We put $z_4=-z_1=\sqrt{\mu_-}, z_3=-z_2=\sqrt{\mu_+}$. Note that
$\mu_{\pm}\in\R^{-}$ and $\mu_+\mu_-=1$. We make the M\"obius transformation 
such that  $z_1\rightarrow 0$, $z_2\rightarrow 1$, and $z_4\rightarrow \infty$.
 Let $\mu=\mu_+-\mu_-$, then
the residues of the forms $\aaa_i$ are given by
\begin{equation}
      \begin{split}
         \aaa_{1,0}(t)&=\frac{i(t^2-1)(t^2-9)}{16t^3\mu}=\aaa_{1,\infty}(t)
=\overline{\aaa}_{1,x}(t)
         =\overline{\aaa}_{1,1}(t)\\
         \aaa_{2,0}(t)&=\frac{(\mu_{-}+1)t(t+1)(t-3)}{8t^2\mu
         z_1}=-\aaa_{2,\infty}(t)=\aaa_{2,1}(t)=-\aaa_{2,x}(t)\\
         \aaa_{3,0}(t)&=i\frac{(\mu_{+}-1)t(t-1)(t+3)}{8t^2\mu
         z_1}=-\aaa_{3,\infty}(t)=-\aaa_{3,1}(t)=\aaa_{3,x}(t).
      \end{split}
\end{equation}

The cross ratio $x$ of the four singular points, which is the variable
in the Painlev\'e equation, is related with $t$ by the expression
$$x=\frac{(t+1)(t-3)^3}{(t-1)(t+3)^3}.
$$
The connection 1-form of $\nabla$ will be given in each line by 
\begin{equation}
 \begin{split}
A(t,\la) &=-\sum_{i=1}^3a_i(t)\aaa_i(t,\la)\X_i\\
         &=\frac{1}{3\Delta(t,\la)}\left(i(t^2-1)(t^2-9)\frac{\la}{2}
	 a_1(t)\X_1-t(t+1)(t-3)(\la^2+1)a_2(t)\X_2\right.\\
         &\phantom{\frac{1}{3\Delta(t,\la)}}\quad\left.+it(t-1)(t+3)(1-\la^2)a_3(t)\X_3\right),
     \end{split}
\end{equation}
and the residues will then be
$$A_0(t)=-\sum_{i=1}^3a_i(t)\aaa_{i,0}(t)\X_i=-A_x^{\ast}(t)$$
$$A_{\infty}(t)=-\sum_{i=1}^3a_i(t)\aaa_{i,\infty}(t)\X_i=-A_1^{\ast}(t);$$
the identities between the residues can  be verified from
(4.6). Remember that the solution to the sixth Painlev\'e equation is
given by the function $y(x)$, where $y(x)$ is the only point in 
$\cp^1\setminus\{0,1,x,\infty\}$ such that $A(t,y(x))$ and
$A_{\infty}(t)$ have a common eigenvector corresponding to one of the
eigenvalues. Then, with the above expressions it is possible to find
$y(x)$ in terms of the instanton. 

\begin{Rem} {\it Another proof of theorem 4.4.} We have from equation 3.4
$$\operatorname{tr}(A^2_{\infty})=\sum^3_{i=1}a_i(t)^2\aaa_{i,\infty}(t)^2
\operatorname{tr}(\X_i^2)=-2\sum^3_{i=1}a_i(t)^2\aaa_{i,\infty}(t)^2.$$
From expressions (4.4) it is easy to see that $\lim_{t\to 1}\aaa_{1,\infty}(t)
=\lim_{t\to 1}\aaa_{3,\infty}(t)=0$ and $\lim_{t\to 1}\aaa_{2,\infty}(t)=
\frac{i}{4}$. Therefore, using the boundary conditions on the $a_i$'s, we have
$$\operatorname{tr}(A^2_{\infty})=-2\lim_{t\to 1}\sum^3_{i=1}a_i(t)^2
\aaa_{i,\infty}(t)^2=\frac{n^2}{8},$$
which is the same value obtained in the proof of theorem 4.4.
\end{Rem}

\bibliographystyle{plain}
\bibliography{ref_article}
\end{document}